\documentclass{article}
\usepackage{arxiv}
\usepackage{graphicx}
\usepackage{setspace}
\usepackage{makecell}
\usepackage{booktabs} 
\usepackage{xcolor}
\usepackage{amsmath}
\usepackage{graphicx}
\usepackage{subfigure}
\usepackage{float}
\usepackage{hyperref}
\usepackage{caption}
\usepackage{multirow}
\usepackage{amssymb} 
\usepackage{url}
\usepackage{tikz}
\usepackage{bbm}
\usepackage[super, sort&compress, numbers]{natbib}
\usepackage{diagbox}
\usepackage{tabularx}
\usepackage{amsfonts}
\usepackage{docmute}
\usepackage{appendix}
\usetikzlibrary{shapes,decorations,arrows,calc,arrows.meta,fit,positioning}
\tikzset{
    auto,node distance =1 cm and 1cm,semithick,
    state/.style ={ellipse, draw, minimum width = 0.7cm},
    point/.style = {circle, draw, inner sep=0.04cm,fill,node contents={}},
    bidirected/.style={Latex-Latex,dashed},
    el/.style = {inner sep=2pt, align=left, sloped}
    state/.style={circle, draw, minimum size=0.2cm},
            rectangle state/.style={rectangle, draw, minimum size=0.2cm},
    double state/.style={rectangle, draw, double, minimum size=0.2cm, double distance=0.5pt}, 
    double arrow/.style={double, line width=1pt, double distance=2pt}
}

\hypersetup{
    colorlinks = true,
    linkcolor = blue,
    citecolor = blue,
    urlcolor  = blue
}
\newtheorem{theorem}{Theorem}
\newtheorem{proof}{Proof}
\newcommand{\indep}{\perp \!\!\! \perp}

\title{Causal inference via propensity scores for case--control studies}

\author{
Yan Liu$^{1}$,
Anita Koushik$^{2,3}$,
Philippe Boileau$^{4,5}$,
Cong Jiang$^{6}$,
Miceline M\'esidor$^{7,1}$, \\[2pt]
\bfseries Claudia Waddingham$^{3}$,
Denis Talbot$^{8,9}$,
and Mireille E. Schnitzer$^{1,4,10}$
\\[12pt]
\small
$^{1}$ \normalsize{\textit{Faculty of Pharmacy, Université de Montréal, QC, Canada}}\\ 
 $^{2}$ \normalsize{\textit{Department of Oncology, McGill University, QC, Canada}}\\ 
$^{3}$ \normalsize{\textit{Centre de recherche de St. Mary, QC, Canada}}\\  
$^{4}$ \normalsize{\textit{Department of Epidemiology, Biostatistics and Occupational Health, McGill University,QC, Canada}}\\ 
$^{5}$ \normalsize{\textit{Research Institute of the McGill University Health Centre, QC, Canada}}\\ 
$^{6}$ \normalsize{\textit{Department of Epidemiology and Biostatistics, SUNY Downstate Health Sciences University, NY, United States}}\\
$^{7}$ \normalsize{\textit{Institut national de la recherche scientifique, Centre Armand-Frappier Santé Biotechnologie, QC, Canada}}\\ 
$^{8}$ \normalsize{\textit{Département de médecine sociale et préventive, Université Laval, QC, Canada}}\\
$^{9}$ \normalsize{\textit{Centre de recherche du CHU de Québec – Université Laval, QC, Canada}}\\
$^{10}$ \normalsize{\textit{Département de médecine sociale et préventive, Université de Montréal, QC, Canada}}\\
\\[6pt]
\small \textbf{Correspondence}: Mireille E Schnitzer, Faculty of Pharmacy, Université de Montréal, Montréal, QC, H3C3J7, Canada\\ Email: mireille.schnitzer@umontreal.ca
}

\begin{document}

\onehalfspacing

\maketitle
\begin{abstract}
Propensity score methods for causal inference are increasingly being used in cohort and experimental designs, but their development and uptake in outcome-dependent sampling schemes, such as case--control studies, remains limited. Case--control studies involve the sampling of individuals with and without an outcome of interest with the goal of estimating the effects of past exposures. When the design is observational, statistical adjustment for confounding bias is necessary. In case--control studies, propensity score models can be fit using control data under the assumption that the controls are representative of the source population with respect to their exposure distribution conditional on covariates (``control exchangeability"). In this paper, we first demonstrate that relative effects, such as causal risk ratios, are estimable under three different case--control design variants using control-fitted propensity scores. We appropriate two existing estimators for these designs: inverse probability of treatment weighting and an efficient and doubly robust estimator. We also introduce a novel two-step propensity score caliper-matching procedure for case--control designs.  We introduce novel diagnostic tools to verify two necessary types of overlap.  We then contrast our estimators using simulated data and apply them to examine the association between regular aspirin use and ovarian cancer risk.
\end{abstract}
\textbf{Key words}: Propensity scores, causal inference, case--control studies, inverse probability of treatment weighting, doubly robust estimation, overlap diagnostics

\section*{Introduction}
Propensity score methods are now widely used in observational epidemiological studies to address confounding when randomization is not feasible. Since their formal introduction by Rosenbaum and Rubin, \cite{rosenbaum1983central} propensity score methods have been developed for cohort and experimental designs through weighting, matching, stratification, adjustment, and doubly robust approaches. \cite{rosenbaum1983central, hernan2006estimating, austin2011introduction, stuart2010matching} Application of propensity score methods aims to balance measured confounders across exposure groups and reduce bias in the estimation of causal effects under identification assumptions. However, outcome-dependent sampling inherent in case--control (CC) designs complicates estimation and application. \cite{maansson2007estimation,rose2008simple} 
Perhaps consequently, causal inference methods for CC data have had relatively limited development and uptake. \cite{mesidor2025use} Despite this, CC studies are convenient and frequently employed in the context of rare outcomes and rapidly evolving settings where prospective studies may be infeasible or more expensive.

CC data are typically analyzed using multivariable logistic regression to adjust for confounding, but this approach is biased when the model is misspecified. 
While past work has demonstrated that nonparametric causal effect estimation is feasible in the CC setting when the outcome prevalence or sampling probabilities are known, \cite{van2008estimation, rose2008simple} these quantities are typically not available for the specific population under study. When selected controls are representative of the source population with respect to the exposure distribution conditional on measured covariates (``control exchangeability"), a condition that is notably satisfied under a rare outcome assumption, propensity scores can be consistently estimated using only the controls in CC settings even when the outcome prevalence or sampling probabilities are unknown. \cite{robins1999choice} The rare outcome assumption is typically applicable because the CC design is usually used in this context. \cite{vanderweele2014invited} Under control exchangeability, propensity score models fit with control data can be incorporated into inverse probability of treatment weighting (IPTW) estimators. \cite{maansson2007estimation} Additionally, a semiparametric efficient doubly robust estimator  \cite{jiangTNDDR} has been developed for the test-negative design (TND), a CC variant in which cases and controls are identified according to diagnostic test results among care-seeking individuals. \cite{sullivan2016theoretical} 

Propensity score matching is an alternative design-based method for causal inference. It is often viewed as a more intuitive approach, particularly for ``non-technical audiences",~\cite{RosenbaumRubin1985} but was not previously adapted to CC studies. In general, individual matching of cases to controls based on covariates in CC studies reduces estimation variance. But this kind of matching does not itself control for confounding bias, and may rather introduce selection bias.~\cite{casecontrolmatchingbiasMansournia2023, mansournia2018case}

In this paper, we revisit past results \cite{robins1999choice, maansson2007estimation} for CC designs showing that propensity scores are identified using control data under a rare-disease assumption, allowing for the estimation of population relative effects, even when the prevalence of the outcome in the population or sampling probabilities are unknown. We introduce the more general notion of ``control exchangeability" for general CC designs to clarify the specific condition under which propensity score estimation is possible. We then present a gamut of relative effect estimation procedures to facilitate the uptake of causal inference methodology in CC data analyses. We appropriate an existing IPTW estimator and a semiparametric-efficient and doubly robust estimator to the general CC setting. We introduce a novel two-stage propensity score matching approach that, unlike standard CC matching, can control for confounding bias, and four diagnostics to verify whether there is sufficient propensity score overlap between treated and untreated controls and between cases and controls, respectively. Finally, we apply this methodology to data from a CC study investigating the effect of regular aspirin use on the incidence of ovarian cancer.\cite{sarr_nonopioid_analgesic}

\section*{Methodology}

We are interested in estimating the effect of a binary exposure or treatment $A$ on an outcome $Y$. We denote the multivariate set of measured confounders as $\boldsymbol{C}$. We will consider two target parameters of interest: the marginal population risk ratio (mRR),
\begin{align*} 
    \psi_{mRR}  := \frac{\mathbb{E}\left[\mathbb{P}(Y=1 \mid A=1,  \boldsymbol{C}  ) \right] }{\mathbb{E}\left[\mathbb{P}(Y=1 \mid A=0,  \boldsymbol{C} ) \right]} ,%\label{mRR}
\end{align*}
and the marginal risk ratio among the treated (mRRT),
\begin{align} 
    &\psi_{mRRT}  := \frac{\mathbb{E}\left[\mathbb{P}(Y=1 \mid A=1,  \boldsymbol{C} ) \mid A=1\right ] }{\mathbb{E}\left[\mathbb{P}(Y=1 \mid A=0,  \boldsymbol{C} )\mid A=1 \right]}=\frac{\mathbb{P}(Y=1 \mid A=1 )}{\mathbb{E}\left[\mathbb{P}(Y=1 \mid A=0,  \boldsymbol{C} )\mid A=1 \right]}. \label{mRRT}
\end{align}
Under typical causal assumptions, these parameters can be interpreted causally. The latter parameter may be of most interest when all treated individuals ($A=1$) have a non-zero probability of having been untreated ($A=0$) but some untreated individuals have zero probability of being treated given the values of their confounders $\boldsymbol{C}$.

\subsection*{Case--control design}
CC studies separately recruit individuals with ($Y=1$) and without ($Y=0$) the study outcome. We define $S_1$ to indicate eligibility for study inclusion as a case and $S_0$ for eligibility as a control. Overall eligibility is denoted $S=S_0 \cup S_1$. 
All cases are eligible, so that $\mathbb{P}(S=1\mid Y=1)=1$ or $Y=1 \Rightarrow S=1$. Controls are eligible up to an investigator-specified sampling fraction $\mathbb{P}(S=1\mid Y=0)$ depending on the desired ratio of cases to controls. 
We define $q = \mathbb{P}(S=1)$ to be the overall probability of eligibility, a generally unknown quantity depending on the proportion of cases in the population as well as the investigator-specified sampling fraction.  The complete data are defined as $(\boldsymbol{C},A,Y,S)$. We sample $N_1$ i.i.d. draws from $(\boldsymbol{C},A, Y)\mathbb{I}(S_1=1)$, and $N_0$ i.i.d. draws from $(\boldsymbol{C},A, Y)\mathbb{I}(S_0=1)$, defining $N=N_1+N_0$ with individual-specific data denoted $(\boldsymbol{C}_i,A_i,Y_i),i=1,...,N$.

In the case of a rare outcome, we may assume that the controls are representative of the source population in terms of the conditional distribution of the exposure. That is, $
S_0 \indep A \mid  \boldsymbol{C} $ approximately holds.\cite{robins1999choice} In past work, we named this the ``control exchangeability" assumption.\cite{jiangTNDDR} The directed acyclic graph (DAG) in Figure \ref{dag} (A) represents the CC design. The rarity of $Y=1$ lets us ignore the link between case ($Y=1$) and control ($Y=0$) status (since controls are essentially the entire population) such that the independence between $S_0$ and $A$ can be interpreted directly from the DAG. 
This assumption directly enables estimation of propensity scores through $\mathbb{P}_{CC}(A=a\mid Y=0,\boldsymbol{C}) = \mathbb{P}(A=a\mid Y=0,S=1,\boldsymbol{C})=\mathbb{P}(A=a\mid S_0=1,\boldsymbol{C})=\mathbb{P}(A=a\mid \boldsymbol{C})$, where $\mathbb{P}_{CC}$ represents the probability induced by CC sampling. Throughout the paper, we use $\mathbb{P}$ and $\mathbb{E}$ with no subscript to represent probability and expectation in the source population. 

Identifiability of relative effects with rare outcome CC data using an IPTW g-formula was discussed by Robins \cite{robins1999choice} and Mansson \cite{maansson2007estimation} and applied by Rose and van der Laan. \cite{rose2008simple, rose2014double} They used the identifiability formula, 
\begin{equation}\mathbb{E}\{\mathbb{P}(Y=1\mid A=a, \boldsymbol{C}) \} = q \times \mathbb{E}_{CC}\left\{Y \mathbb{I}(A=a) \middle/ \mathbb{P}(A=a\mid \boldsymbol{C})\right\} = q \times \mathbb{E}_{CC}\left\{Y \mathbb{I}(A=a) \middle/ \mathbb{P}_{CC}(A=a\mid Y=0, \boldsymbol{C})\right\},\label{IPTWform}
\end{equation}
assuming positivity s.t. $0<\mathbb{P}(A=a\mid \boldsymbol{C})<1$ almost surely.
We give our proof in Appendix A.
When the target parameter is a contrast on the relative scale, such as $\psi_{mRR}$, the constant $q$ cancels out and does not need to be estimated.

For the $\psi_{mRRT}$ parameter of the risk ratio effect among the treated, the numerator is simply the probability of experiencing the outcome $Y=1$ in the treated population. It can be identified up to a constant in the CC sampling design through
    $\mathbb{P}(Y=1\mid A=1) = \mathbb{E}_{CC}(Y A ) q_1 /\mathbb{P}_{CC}(A=1)$ where $q_1=\mathbb{P}(S=1\mid A=1)$ is the probability of inclusion in the treated subpopulation.
Similarly, under a weaker positivity condition, $\mathbb{P}(A=1\mid \boldsymbol{C})<1$ almost surely, the denominator can be identified up to the same constant through the IPTW formula,
 $$
    \mathbb{E}\{\mathbb{P}(Y=1\mid A=0, \boldsymbol{C})\mid A=1 \} = q_1 \times \mathbb{E}_{CC}\left\{Y (1-A) \mathbb{P}(A=1\mid \boldsymbol{C})\middle/ \mathbb{P}(A=0\mid \boldsymbol{C}) \mid A=1\right\}.$$
where, as before, the population propensity scores are equal to propensity scores conditional on the control condition and therefore estimable in the CC sample. The proof is given in Appendix B. Again, the constant $q_1$ cancels out in the IPTW formula of $\psi_{mRRT}$ and does not need to be estimated.

\subsection*{Case-control variants}
Case--cohort designs are similar in that they sample all individuals with the outcome ($Y=1$) but sample controls from the general population (cohort) regardless of their outcome. Using the same notation as before, if the sampling of controls is either completely random or conditional on measured covariates, we once again satisfy the assumption that controls are independent of the exposure conditional on covariates, $
S_0 \indep A \mid  \boldsymbol{C}$. Thus, propensity scores are identified using the control data and the same IPTW formulas hold. 

The TND, another CC variant, is discussed in Appendix C. Figure~\ref{dag} (B) and (C) are DAGs representing assumed case--cohort and TND data-generating models, respectively. 

\begin{figure}[!ht]
\centering
  \tikzset{state/.style={circle, draw, minimum size=0.8cm, inner sep=0}, double state/.style={circle, draw, double, minimum size=0.8cm, inner sep=0, line width=1pt}, double arrow/.style={double, -{Latex}, line width=0.5pt, double distance=1pt} }
\begin{subfigure}
    \centering
    \begin{tikzpicture}[remember picture]
         \node[state] (A) {$A$};
         \node[state, minimum size=1.2cm] (Y1) [right=of A] {$Y=1$};
         \node[rectangle state] (S1) [right=of Y1] {$S_1$};
         \node[state, minimum size=1.2cm] (Y0) [below=of Y1] {$Y=0$};
         \node[rectangle state] (S0) [right=of Y0] {$S_0$};
         \node[state] (C) [above left=of A, yshift=.1cm, xshift=-.4cm] {$\boldsymbol{C}$};
         \node (S) [above right=of S0, yshift=-.5cm, xshift=.5cm] { $S=S_1\cup S_0$};
        \draw[>=stealth, ->, line width=0.6pt, scale=1] (A) -- (Y1);
        \draw[>=stealth, ->, line width=0.6pt, scale=1] (C) -- (A);
        \path[>=stealth, ->, line width=0.6pt, scale =1]  (C) edge[bend left=30] (Y1);
        \draw[>=stealth, ->, line width=0.6pt, scale=1] (Y0) -- (S0);
        \draw[double arrow] (Y1) -- (S1);
        \draw[double arrow] (Y1) -- (Y0);
        \node[anchor=north west, overlay, xshift=-10mm, yshift=5mm]at (current bounding box.north west) {(A)};
    \end{tikzpicture}
\end{subfigure}

\vspace{0.6cm}
\begin{subfigure}
    \centering
    \begin{tikzpicture}[remember picture]
        \node[state] (A) {$A$};
        \node[state, minimum size=1.2cm] (Y1) [right=of A] {$Y=1$};
        \node[rectangle state] (S1) [right=of Y1] {$S_1$};
        \node[rectangle state] (S0) [below=of S1] {$S_0$};
        \node[state] (C) [above left=of A, yshift=.1cm, xshift=-.4cm] {$\boldsymbol{C}$};
        \node (S) [above right=of S0, yshift=-.5cm, xshift=.5cm] { $S=S_1\cup S_0$};

       \draw[>=stealth, ->, line width=0.7pt, scale=1] (A) -- (Y1);
       \draw[>=stealth, ->, line width=0.6pt, scale=1] (C) -- (A);
        \path[>=stealth, ->, line width=0.6pt, scale =1] (C) edge[bend left=30] (Y1);
        \draw[double arrow] (Y1) -- (S1);
        \node[anchor=north west, overlay, xshift=-10mm, yshift=5mm]
    at (current bounding box.north west) {(B)};
    \end{tikzpicture}
\end{subfigure}

\vspace{0.6cm}
\begin{subfigure}
\centering
\begin{tikzpicture}[remember picture]
        \node[state] (A) {$A$};
        \node[state] (W1) [right=of A] {$W_1$};
        \node[rectangle state, minimum width=1.2cm, minimum height = 0.6cm] (W) [below=of W1] {$W$};
        \node[rectangle state, minimum width=0.6cm, minimum height = 0.5cm] (S) [right=of W] {$S$};
        \node[state] (W0) [below=of W] {$W_{0}$};
        \node[state] (C) [above left=of A, yshift=.1cm, xshift=-.4cm] {$\boldsymbol{C}$};
        \node (S0) [right=of S, xshift=.5cm] { $S_0=W_0\cap S$};

        \draw[>=stealth, ->, line width=0.6pt, scale=1]  (A) -- (W1);
        \draw[>=stealth, ->, line width=0.6pt, scale=1]  (C)-- (A);
        \path[>=stealth, ->, line width=0.6pt, scale =1]  (C) edge[bend left=30] (W1);
        \path[>=stealth, ->, line width=0.6pt, scale =1]  (C) edge[bend right=30] (W0);
        \path[>=stealth, ->, line width=0.6pt, scale =1]  (C) edge[bend left=45] (S);
        \draw[>=stealth, ->, line width=0.6pt, scale=1]  (W) -- (S);
        \draw[double arrow] (W1) -- (W);
        \draw[double arrow] (W0) -- (W);
        \node[anchor=north west, overlay, xshift=-10mm, yshift=5mm]
    at (current bounding box.north west) {(C)};
        
\end{tikzpicture}

\caption{Directed acyclic graph (DAG) for the case-control design (A), the case-cohort design (B) and the TND (C).~\cite{jiangTNDDR} Boxes indicate control for the variables. Double bar arrows indicate deterministic relationships. 
In designs (A) and (C), our DAG assumptions are not sufficient to imply control exchangeability, $S_0 \indep A \mid \boldsymbol{C}$.}
    \label{dag}
\end{subfigure}
\end{figure}

\subsection*{Estimators} 

 For estimation, a propensity score model is fit using only the control data; then this fit is used to predict propensity scores for all study participants. We will review IPTW for CC studies applied  to estimate both $\psi_{mRR}$ and $\psi_{mRRT}$. We also review a semiparametric efficient and doubly robust (or double machine learning) estimator first proposed for the TND \cite{jiangTNDDR} but directly applicable to any of these three CC study designs; we apply this method to estimate  $\psi_{mRR}$. We then propose a novel propensity score matching procedure to estimate $\psi_{mRRT}$.

 \subsubsection*{\textit{Inverse probability of treatment weighted estimation of \texorpdfstring{$\psi_{mRR}$}{psi\_mRR} and \texorpdfstring{$\psi_{mRRT}$}{psi\_mRRT}}} 
 IPTW involves the construction of weights that are used to reweight the sample outcomes in order to adjust for measured covariates. We use $\pi_N(\boldsymbol{C}_i)$ to denote the propensity score $\mathbb{P}(A_i=1\mid \boldsymbol{C}_i)$ for each participant $i=1,...,N$.
 
 For the estimation of $\psi_{mRR}$, the weights for the treated individuals may be defined as  $\omega_{ti}=1/ \pi_N(\boldsymbol{C}_i)$ and for the untreated as  $\omega_{ui}=1/ \{1-\pi_N(\boldsymbol{C}_i)\}$.
 For the estimation of $\psi_{mRRT}$, the weights are redefined for the treated as $\omega_{ti}=1$ and for the untreated as  $\omega_{ui}=\pi_N(\boldsymbol{C}_i)/\{1-\pi_N(\boldsymbol{C}_i)\}$. The IPTW estimator of either parameter  is then given as
\[\frac{\left. \sum_{i=1}^N Y_i A_i \omega_{ti}\middle/ \sum_{i=1}^N A_i \right.} {\left. \sum_{i=1}^N Y_i (1-A_i) \omega_{ui} \middle/ \sum_{i=1}^N A_i \right.}=\frac{\left. \sum_{i=1}^N Y_i A_i \omega_{ti} \right.} {\left. \sum_{i=1}^N Y_i (1-A_i) \omega_{ui}  \right.}.\] 
We denote the estimators of respective quantities as $\hat{\psi}^{IPTW}_{mRR}$ and $\hat{\psi}^{IPTW}_{mRRT}$. We note that if there are some values of ${\pi}_{N}(\boldsymbol{C}_i)$ close to 0 or   1, regularization or truncation may be necessary to avoid numerical instability attributable to divisions by near-zero values.

\subsubsection*{\textit{Doubly robust estimation of \texorpdfstring{$\psi_{mRR}$}{psi\_mRR}}}
Following,\cite{jiangTNDDR} we note that the mRR can be equivalently written as $\psi_{1}/\psi_0$ where $\psi_{a} :=\mathbb{E}\left[\mathbb{P}(Y=1 \mid A=a,  \boldsymbol{C}  = \boldsymbol{c} ) \right]/q$ for $a=(0,1)$.
A one-step efficient and doubly robust estimator of $\psi_{a}$\cite{jiangTNDDR} begins with estimators of $\pi_a(\boldsymbol{C})=\mathbb{P}(A=a\mid \boldsymbol{C})$ and $\mu_a(\boldsymbol{C})=\mathbb{P}_{CC}(Y = 1 \mid A=a ,  \boldsymbol{C} )$, denoted $\pi_{aN}(\boldsymbol{C}_i)$ and $\mu_{aN}(\boldsymbol{C}_i)$ respectively for each participant $i=1,...,N$.  
We then define the one-step estimator of $\psi_{a}$ as
\begin{align}
\hat{\psi}^{OS}_{a}= \frac{1}{N}\sum_{i=1}^N \frac{\mathbb{I}(Y_i = 1, A_i=a)}{{\pi}_{aN}(\boldsymbol{C}_i)} - \frac{{\mu}_{aN}(\boldsymbol{C}_i)}{{\pi}_{aN}(\boldsymbol{C}_i)[1 - {\mu}_{aN}(\boldsymbol{C}_i)]} \mathbb{I}\left( Y_i=0\right) \left[\mathbb{I}(A_i=a) - {\pi}_{aN}(\boldsymbol{C}_i)\right],\quad\text{for } a=0,1. 
\end{align}
The mRR is then estimated as $\hat{\psi}_{mRR}^{OS}=\hat{\psi}^{OS}_{1}/\hat{\psi}^{OS}_{0}$. Cross-fitting can be used to lessen regularity assumptions of the estimators of $\pi_a$ and $\mu_a$. Root-$N$ asymptotic normality of $\hat{\psi}_{a}^{OS}$ requires $N^{-1/4}$ rates of convergence of the estimators of $\pi_a$ and $\mu_a$, which implies that certain flexible machine learning methods can be used to estimate these quantities.~\cite{jiangTNDDR} We note that if there are some values of ${\pi}_{aN}(\boldsymbol{C}_i)$ close to 0 or  ${\mu}_{aN}(\boldsymbol{C}_i)$ close to 1, additional regularization or truncation may be necessary.

\subsubsection*{\textit{Propensity score matching for confounder control for estimation of \texorpdfstring{$\psi_{mRRT}$}{psi\_mRRT}}} 

 We propose a two-stage radius matching procedure with replacement to estimate $\psi_{mRRT}$, which is practical when there are a large number of candidate control participants relative to the treated participants. First, propensity scores are estimated by fitting a model on control participants and predicting scores for the entire sample. Matching is then performed on the logit propensity score, defining closeness between units $i$ and $j$ as the distance $\mid\text{logit}\{\pi_N(\boldsymbol{C}_j)\}-\text{logit}\{\pi_N(\boldsymbol{C}_i)\}\mid$. Both stages utilize radius matching, \cite{heckman1998characterizing,dehejia2002propensity,caliendo2008some} where the first step is restricted to controls and the second matches controls to cases among the untreated individuals, including all available matches within a pre-defined radius (or ``caliper'').

\emph{First stage: matching for covariate balance among the controls.} 
We retain all treated controls (by defining the weight $m_i=1$) and match them with replacement to untreated controls within radius $\rho_1>0$. For untreated controls, $m_i$ is the sum of the fraction importance of each match to a treated observation (e.g., $m_i=1/3+1/4$ if matched to two treated controls that have 3 and 4 total matches, respectively). 
At this stage, one can assess the success of the matching in creating an untreated control group that has good covariate balance with respect to the treated control group. If an acceptable balance is not obtained, matching parameters such as radius and propensity score model can be modified.

\emph{Second stage: donor matching.} 
We now consider the cases. Each treated case $j$ is retained ($m_j=1$). For each untreated case $j$, we identify all untreated controls within a radius $\rho_2>0$ (with replacement) and ``donate'' the mean of their first-stage weights to case $j$. 
So if we identify two untreated controls $i^{'}$ and $i^{''}$ within radius $\rho_2$ of case $j$, we match both to untreated case $j$, by setting $m_j=(m_{i^{'}}+m_{i^{''}})/2$.

The matching estimator can then be defined as
\begin{equation}
\hat{\psi}^{PSM}_{mRRT}=\frac{\left.\sum_{j=1}^N Y_j A_j  \middle/ \sum_{j=1}^N A_j\right.}{\left.\sum_{j=1}^N Y_j (1-A_j) m_j\middle/ \sum_{j=1}^N A_j \right.}
\end{equation}

The proof of convergence of the matching estimator is given in Appendix D. The complete matching algorithm is given in Figure~\ref{ps matching algorithm} and pseudocode in Table S1 in Appendix E. 

Ideally, all treated controls and untreated cases are retained. But if in either stage a match cannot be obtained, the treated control or untreated case is removed, indicating a lack of overlap in the propensity scores between the comparison groups. While radii can be adjusted to limit these exclusions, our implementation in Figure~\ref{ps matching algorithm} forces a nearest-neighbor match in the first stage even if no neighbors are within the radius.

\begin{figure}
    \centering
  \begin{tikzpicture}
            \tikzset{
                state/.style={circle, draw, minimum size=0.8cm, inner sep=0},
                double state/.style={circle, draw, double, minimum size=0.8cm, inner sep=0, line width=1pt},
                double arrow/.style={double, line width=0.5pt, double distance=1pt},
                arrow style/.style={-{Stealth[scale=1]}}
            }
\node (data) at (7,15) [align=center, inner sep=4pt]  
{Data $O=\{(\boldsymbol{C}_i, A_i, Y_i), i = 1, \cdots, N\}$};
  \draw[arrow style] (7, 14.7) -- (7, 14.0) node[below, align=center] {Fit a propensity score model for $\pi_i = \mathbb{P}(A_i=1|\boldsymbol{C}_i)$ in controls ($Y_i=0$)\\make predictions for all participants to obtain estimates $\hat{\pi}_i, \text{for}~ i=1,\cdots, N$.};
\node (split) at (6.5,12.5) [] {Split};
\draw[>=stealth, line width=0.5pt, scale=1] (7, 13) -- (7, 12);
 \draw[>=stealth, line width=0.5pt, scale=1] (1,12) -- (15,12) node(splitline) [midway, above=7pt] {};
    \draw[arrow style] (1, 12) -- (1, 11.5) node[below, align=center] {Treated controls, $S_{1}^{ctrl}$\\$(i:Y_i=0, A_i=1)$};
    \draw[arrow style] (6, 12) -- (6, 11.5) node[below, align=center] {Untreated controls, $S_{0}^{ctrl}$\\$(i:Y_i=0, A_i=0)$};
    \draw[arrow style] (11, 12) -- (11, 11.5) node[below, align=center] {Untreated cases, $S_{0}^{case}$\\$(i:Y_i=1, A_i=0)$};
    \draw[arrow style] (15, 12) -- (15, 11.5) node[below, align=center] {Treated cases, $S_{1}^{case}$\\$(i:Y_i=1, A_i=1)$};

%---------- first stage
     \draw[line width=0.5pt] (1, 10.5) -- (1, 10) node[below, align=center] {};
    \draw[line width=0.5pt] (6.3, 10.5) -- (6.3, 10) node[below, align=center] {};
    \draw[line width=0.5pt] (1,10) -- (6.3,10) node  [midway, above=7pt] {};
    
\draw[arrow style] (3.5, 10) -- (3.5, 8.7) node(sd1)[below, align=center] {Compute $SD\{logit(\hat{\pi}_i):i\in (S_1^{ctrl} \cup S_0^{ctrl} )\}$,\\denote it by $\sigma_1$, then define $\rho_1 = d_1\times \sigma_1$.};
\draw[arrow style] (3.5, 7.8) -- (3.5, 7.2) node(dist1)[below, align=center] {Build distance matrix $D^{ctrl}$ with entries\\$|\text{logit}(\hat{\pi}_j)- \text{logit}(\hat{\pi}_l)|$ for $j\in S_{1}^{ctrl},l\in S_{0}^{ctrl}$.};
\draw[arrow style] (3.5, 6.2) -- (3.5, 5.6) node(match1)[below, align=center] {For each treated control $j\in S_{1}^{ctrl}$, identify\\ all untreated controls within a radius of $\rho_1$\\to define the match set of indices $L_j$.};

\draw[arrow style] (3.5, 4.2) -- (3.5, 3.7) node(eligible1)[below, align=center] {Eligible match exists within $\rho_1$?};
\node (no) at (1.8,2.8) [] {Yes};
\draw[arrow style] (3.4, 3.2) -- (5, 2.3) node(lj1)[below, align=center] {Define \\$L_j = \text{arg~min}_l(D^{ctrl}_{j,l})$ \\for $l\in S_0^{ctrl}$};
\node (no) at (4.8,2.8) [] {No};
\draw[arrow style] (3.4, 3.2) -- (1.5, 2.3) node(lj0)[below, align=center] {Define \\$L_j = \{l\in S_0^{ctrl}$ s.t.\\$D^{ctrl}_{j,l} \leq \rho_1\}$};

\draw[>=stealth, line width=0.5pt, scale=1] (5, 0.9) -- (3.4, 0) node[below, align=center] {};
\draw[>=stealth, line width=0.5pt, scale=1] (1.5, 0.9) -- (3.4, 0) node[below, align=center] {};
\draw[arrow style] (3.4, 0) -- (3.4, -0.5) node(ml)[below, align=center] {Compute the weight for each \\untreated control $m_l=\sum_{j\in S_1^{ctrl}} \mathbbm{1}_{(l\in L_j)}/|L_j|$ };
 \node [draw, dashed, thick, inner sep =0cm, fit =(sd1) (dist1) (match1) (eligible1) (lj1) (lj0) (ml)]{};
 \node(stage1) at (1, 9.1) [] {-- \emph{First stage} --};

%---------- second stage
   \draw[>=stealth, line width=0.5pt, scale=1] (7, 10.5) -- (7, 10) node[below, align=center] {};
    \draw[>=stealth, line width=0.5pt, scale=1] (12.3, 10.5) -- (12.3, 10) node[below, align=center] {};
    \draw[>=stealth, line width=0.5pt, scale=1] (7,10) -- (12.3,10) node  [midway, above=7pt] {};
    
\draw[arrow style] (10.5, 10) -- (10.5, 8.7) node(sd2)[below, align=center] {Compute $SD\{logit(\hat{\pi}_i): i\in (S_0^{case} \cup S_0^{ctrl} )\}$,\\denote it by $\sigma_2$, then define $\rho_2 = d_2\times \sigma_2$.};
\draw[arrow style] (10.5, 7.8) -- (10.5, 7.2) node(dist2)[below, align=center] {Build distance matrix $D^{untrt}$ with entries\\$|\text{logit}(\hat{\pi}_k)- \text{logit}(\hat{\pi}_l)|$ for $k\in S_{0}^{case},l\in S_{0}^{ctrl}$.};
\draw[arrow style] (10.5, 6.2) -- (10.5, 5.6) node(match2)[below, align=center] {For each untreated case $k\in S_{0}^{case}$, identify\\ all untreated controls within a radius of $\rho_2$\\to define the match set of indices $L_k$.};

\draw[arrow style] (10.5, 4.2) -- (10.5, 3.7) node(eligible2)[below, align=center] {Eligible match exists within $\rho_2$?};
\node (no) at (12.2,2.8) [] {No};
\draw[arrow style] (10.5, 3.2) -- (12.3, 2.3) node(lk0)[below, align=center] {Define $L_k = \varnothing$};
\node (no) at (9.2,2.8) [] {Yes};
\draw[arrow style] (10.5, 3.2) -- (9, 2.3) node(lk1)[below, align=center] {Define \\$L_k = \{l\in S_0^{ctrl}$ s.t.\\$D^{untrt}_{k,l} \leq \rho_2\}$};
\draw[arrow style] (12.3, 1.7) -- (12.3, -0.8) node(mk0)[below, align=center] {$m_k = 0$};
\draw[arrow style] (9.2, 0.9) -- (9.2, -0.8) node(mk1)[below, align=left] {$m_k = \frac{1}{\lvert L_k\rvert}\sum_{\ell\in L_k} m_\ell$};

\node (donor) at (6.2,0.2) [] {Donor};
\path (ml) edge[->, bend left=30] (mk1);
\node [draw, dashed, thick, inner sep =0cm, fit =(sd2) (dist2) (match2) (eligible2) (lk1) (lk0) (mk0) (mk1)]{};
\node(stage1) at (8.3, 9.1) [] {-- \emph{Second stage} --};

%-----mRRT
\draw[>=stealth, line width=0.5pt, scale=1] (15, 10.5) -- (15, -3.5) node[below, align=center] {};
\draw[arrow style] (15, -2.3) -- (11.5, -2.3) node[below, align=center] {};
\draw[arrow style] (15, -3.5) -- (10.6, -3.5) node[below, align=center] {};
\node (m1) at (9.3,-2.3) [] {Define $m_t=1$ for $t\in S_1^{case} $};
\draw[arrow style] (7, -1.6) -- (7, -3) node(mrrt)[below, align=left] {Compute the mRRT estimate, \\$\hat{\psi}_{mRRT}^{PSM} =  \sum_{i=1}^N Y_iA_i/ \sum_{i=1}^N Y_i(1-A_i)m_i$};
\node [draw, inner sep =0cm, fit =(mrrt)]{};
\end{tikzpicture}
\caption{Propensity score matching algorithm  for estimating the marginal risk ratio among the treated (mRRT). The matching radii ($\rho_1, \rho_2$) are defined based on the prespecified radius multipliers $d_1$ and $d_2$.}
\label{ps matching algorithm}
\end{figure}

\subsection*{Inference}
Bootstrap resampling to approximate the estimation variance is valid for the IPTW, one-step, and radius matching estimators under regularity assumptions of the estimation of the propensity score function. \cite{rose2008simple, caliendo2008some, austin2014use} 
For CC and case--cohort,  resampling is done separately in the cases and controls while for the TND, resampling is done in the full dataset. 
We use these approaches in the simulation study and application.

\subsection*{Diagnostic procedures}
We propose two steps of CC diagnostic procedures, for the estimation of $\psi_{mRR}$ and $\psi_{mRRT}$. We recommend these procedures for both matching and algorithms using inverse weighting (like IPTW and the one-step estimator).

For both the $\psi_{mRR}$ and $\psi_{mRRT}$, after constructing the weights (either with matching or directly), the first step is to check propensity score overlap and covariate balance between the weighted treated controls and the weighted untreated controls. This step is akin to the standard overlap and balance-checking step in a prospective cohort study since the controls reflect the source population. Propensity score overlap can be visualized using histograms (e.g., top row of Figure \ref{fig: cc_ps}), while standardized mean differences (SMDs) \cite{austin2009balance} can evaluate the similarity in the central tendency of the weighted covariate distributions between treatment groups in the controls; see top half of Table~\ref{tab:bc_cc}.

The second step verifies propensity score overlap between untreated cases and untreated controls (see bottom left of Figure \ref{fig: cc_ps}). When estimating $\psi_{mRR}$, overlap should also be compared between treated cases and treated controls (Section G.3 in Appendix), as non-overlap indicates that we are extrapolating when predicting values for the cases. For matching, we also recommend verifying post-weighting propensity score overlap between untreated cases and untreated ``donor'' controls, where the untreated controls are weighted corresponding to the sum of their donor fractions (see bottom right of Figure \ref{fig: cc_ps}). Covariate balance should be assessed by contrasting each untreated case with its donor set, using the mean difference between the case's covariate value and the donor set mean for each covariate (Table~\ref{tab:bc_cc}).
The complete covariate balance checking algorithms for IPTW and matching are presented in Tables S2 and S3 in Appendix F, respectively.

\section*{Simulations}
We use simulated CC data to contrast the matching estimator and IPTW for the estimation of $\psi_{mRRT}$ and the one-step estimator, IPTW, and logistic regression for the estimation of $\psi_{mRR}$.  We further tested our procedures with simulated TND data (procedures and results are given in Appendix G).

\subsection*{Data simulation}

We repeatedly generated a source population of $N = 2\times 10^6$ individuals with an overall disease prevalence of approximately 1\%, from which CC data were sampled. Two continuous covariates ($C_1$ and $C_2$) were generated from a bivariate normal distribution and affected both exposure ($A$) and disease ($Y$). The outcome $Y$ was generated from a logistic model including $C_1$, $C_2$, $A$, and an interaction between $A$ and $C_1$ to induce effect modification. In scenario (a), the probability of exposure was set to zero when $C_1< -1$ to induce partial non-overlap and reflect a one-sided violation of the positivity assumption. Thus, $\psi_{mRRT}$ represents an appropriate target parameter and $\psi_{mRR}$ is undefined. 
Scenario (b) removed this restriction so that $\psi_{mRR}$ is defined, whereas scenario (c) used a non-linear treatment model. 
In all scenarios, the analytic CC dataset was formed by randomly sampling  1,000 cases ($Y=1$) and 4,000 controls ($Y=0$).  Appendix G provides the full data-generating mechanisms and scenario (c) results.

\subsection*{Diagnostic checking and matching radius selection}

We present the results of the proposed diagnostic checks for the mRRT using a single simulated dataset. Table \ref{tab:bc_cc} presents the covariate balance before and after applying IPTW or matching. Matching radii were defined as $\rho_a=d_a\times \sigma_a$ for stage $a=1,2$, where $\sigma_a$ is the standard deviation of the logit of the propensity scores in the relevant subset (see Figure~\ref{ps matching algorithm}). Four radius values were considered:  $d_1 = \{0.05, 0.03, 0.01, 0.005\}$ and $d_2 = \{0.05, 0.03, 0.02, 0.01\}$.
In stage one, the treated and untreated controls initially exhibited substantial imbalance, with SMDs exceeding 0.7. Applying IPTW reduced the SMDs to below 0.01. For matching, covariate balance improved markedly as $d_1$ decreased, with SMD also falling below 0.01. A similar pattern was observed in stage two as $d_2$ decreased. We therefore selected $d_1=d_2=0.01$, which achieved a good balance. Figure \ref{fig: cc_ps} illustrates the pre- and post-weighting propensity score overlap. The distribution of treated and untreated controls initially showed non-overlap of untreated with propensity scores near zero, but this is not a violation when estimating the mRRT.  After weighting or matching, the histograms overlapped almost perfectly. In stage two (bottom row), pre-matching overlap between untreated cases and controls suggested that the propensity score model did not substantially extrapolate, and matching produced excellent alignment.

\begin{figure}[htbp]
 \centering
 \includegraphics[width=18cm,height=16cm]{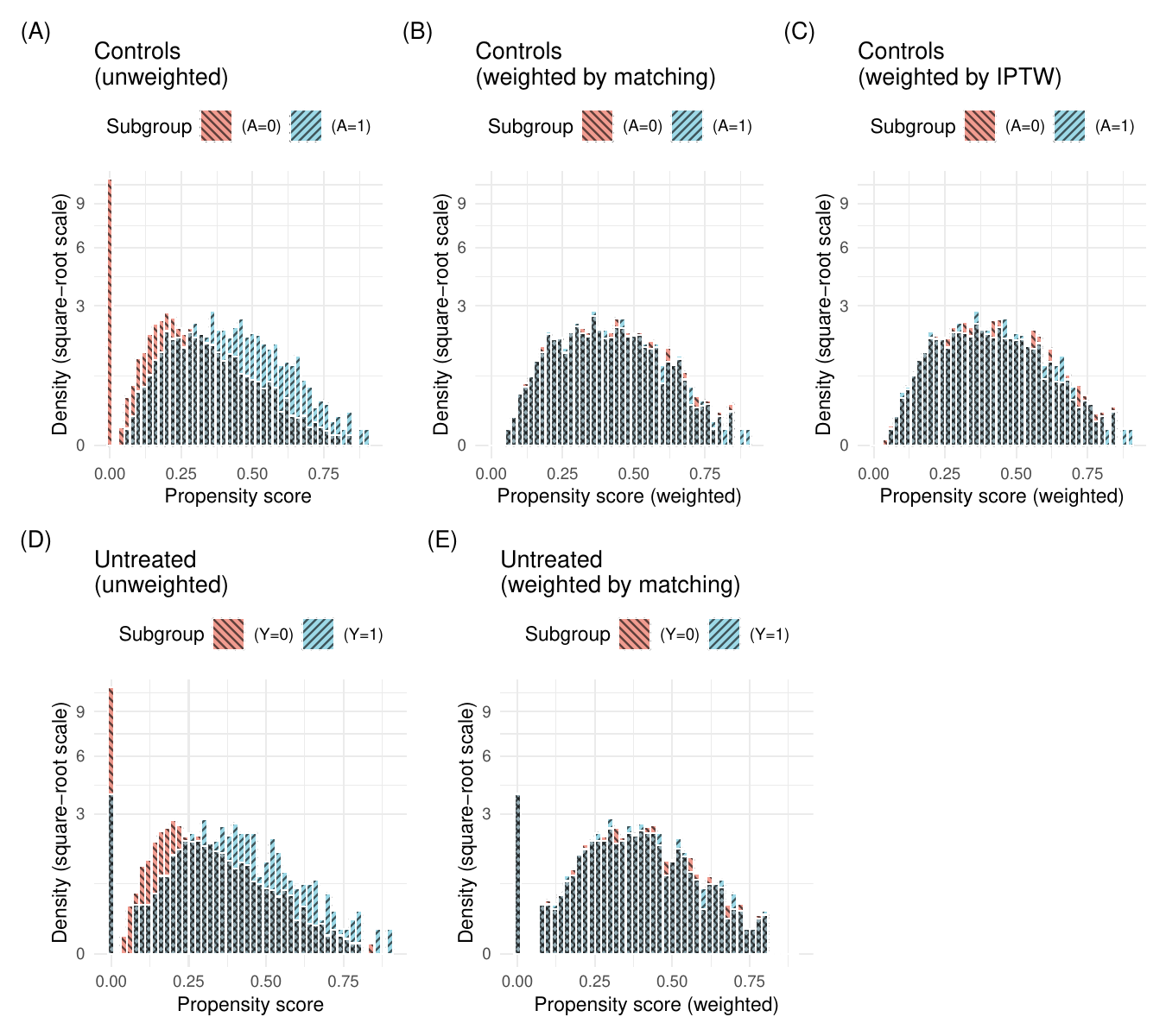}
 \caption{Diagnostic checks of the propensity score overlap for mRRT weighting  in a simulation (a) for a case-control study: comparison between the treated controls and untreated controls before weighting (A) and after weighting for matching with $d_1=d_2=0.01$ (B) and IPTW (C), and between untreated controls and untreated cases before (D) and after matching (E). 
 }
 \label{fig: cc_ps}
 \end{figure}

\begin{table}[ht]
\centering\small\captionsetup{width=.7\textwidth}
\caption{Diagnostic checks of covariate balance for the mRRT  in  case-control simulation (a): comparison between the treated controls and untreated controls before and after weighting for IPTW and matching, and between untreated cases and untreated controls before and after matching across matching radii. Variables are summarized as mean and standard deviation (mean (SD)); SMD: standardized mean difference. }\label{tab:bc_cc}\vspace{2mm}
\begin{tabular}{lllrrc}
  \toprule
   \multicolumn{6}{l}{ \emph{First stage: balance between treated and untreated controls}} \vspace{2mm} \\ 
& $d_1$ & Variable & $(A=0, Y=0)$ & $(A=1, Y=0)$ & SMD \\ 
 \cmidrule(lr){2-6}
\multirow{3}{*}{Before weighting}&  & Size &  2885 & 1115 & \\ 
&   &  $C_1$  & -0.210 (0.994)  & 0.476 (0.785)  &  0.767\\ 
&  &  $C_2$   &  -0.193 (0.974)  & 0.520 (0.950)  &  0.741\\ \cmidrule(lr){2-6}
\multirow{2}{*}{ IPTW} & \multirow{2}{*}{ }  &  $C_1$   & 0.470 (0.807)  &0.476 (0.784)  &0.007\\ 
&  & $C_2$  & 0.519 (0.960)  &0.520 (0.950) & 0.002\\ \cmidrule(lr){2-6}
\multirow{12}{*}{Matching}
 &  \multirow{3}{*}{0.05} & Size & 2252 & 1115 &  \\ 
& &  $C_1$   & 0.449 (0.795) & 0.476 (0.784) &   0.034 \\ 
&  & $C_2$    & 0.483 (0.944) & 0.520 (0.950) &  0.039 \\  \cmidrule(lr){2-6}
 & \multirow{3}{*}{0.03} & Size & 2247 & 1115 &  \\ 
&  &  $C_1$  & 0.462 (0.799) & 0.476 (0.784) &  0.018\\ 
&  & $C_2$   & 0.508 (0.959) & 0.520 (0.950) &  0.013 \\  \cmidrule(lr){2-6}
 & \multirow{3}{*}{0.01} & Size & 2225 & 1115 & \\ 
& &  $C_1$   &  0.465 (0.801) & 0.476 (0.784) &  0.014 \\ 
& & $C_2$   &  0.523 (0.966) & 0.520 (0.950) &  0.003\\   \cmidrule(lr){2-6}
 & \multirow{3}{*}{0.005}  & Size & 2210 & 1115 & \\ 
& &  $C_1$  &   0.464 (0.799) & 0.476 (0.784) &  0.015 \\
 &  &$C_2$  &  0.525 (0.969) & 0.520 (0.950) &  0.005\\ 
   \midrule
  \multicolumn{6}{l}{\emph{Second stage: balance of donor matching (independent of $d_1$) between untreated cases and controls}}\vspace{2mm}\\ 
& $d_2$ & Variable &  $(A=0, Y=0)$ & $(A=0, Y=1)$ & SMD\\ 
 \cmidrule(lr){2-6}
\multirow{3}{*}{Before weighting}&  & Size &  2885 &  536 & \\ 
&   &  $C_1$   & -0.210 (0.994) &  0.324 (0.954) &  0.549\\ 
&  &  $C_2$  &  -0.193 (0.974) & 0.416 (0.897) &  0.651\\ \cmidrule(lr){2-6}
\multirow{12}{*}{Matching}
  & \multirow{3}{*}{0.05} & Size & 2841 & 535 &  \\ 
  &  &  $C_1$  & 0.260 (0.903) &  0.321 (0.951)  & 0.065\\ 
 &   & $C_2$  & 0.391 (0.894)  &  0.411 (0.887)  & 0.022 \\  \cmidrule(lr){2-6}
   & \multirow{3}{*}{0.03}& Size & 2819 & 535 &  \\ 
 &  &  $C_1$  & 0.280 (0.917)  &  0.321 (0.951)  &  0.043 \\ 
 &  & $C_2$  & 0.412 (0.896)  & 0.411 (0.887)  &  0.002 \\  \cmidrule(lr){2-6}
  & \multirow{3}{*}{0.02} & Size & 2793 & 534 &  \\ 
 &  &  $C_1$   &  0.286 (0.922) &  0.318 (0.949) &  0.034\\ 
 &  & $C_2$   & 0.414 (0.888) & 0.405 (0.880) &   0.010\\  \cmidrule(lr){2-6}
   & \multirow{3}{*}{0.01}& Size & 2685 & 534 &  \\ 
 &   & $C_1$  & 0.289 (0.923) & 0.318 (0.949) &  0.030\\ 
 &  & $C_2$   & 0.419 (0.890) & 0.405 (0.880) &  0.015\\ 
 \bottomrule
\end{tabular}
\end{table}

\subsection*{Results}

Table \ref{tab:est_CC} summarizes the performance of the IPTW, one-step, and matching estimators for estimating the mRRT and mRR on 500 replications. In scenario (a), when targeting the mRRT (true value 0.93), the IPTW performed well with negligible bias and close agreement between MC and BS standard errors, yielding a coverage rate of 95.6\%. The matching estimator was sensitive to the radius: larger values, especially $d_1=d_2=0.05$, produced noticeable bias and lower coverage, reaching 82\%, whereas smaller radii improved both. 
For $d_1 \in (0.005, 0.01)$ and $d_2 \in (0.01, 0.02, 0.03)$, the bias was below 0.02 and the coverage rates remained high, typically between 95.8\% and 98\%. 

Scenario (b) evaluated the estimation of the mRR (true value 0.77). IPTW and one-step estimators performed well, producing estimates close to the true value and coverage rates near the nominal level (93.6\% and 94.6\%, respectively). In contrast, the logistic regression estimator showed higher bias (0.063) and lower coverage (80.6\%). In scenario (c), incorporating machine learning substantially improved performance relative to the parametric regression models, reducing bias and increasing coverage (Table S7 in Appendix G.4). This highlighted the benefits of flexible nuisance function estimation, though it may introduce more variance. An empirical demonstration of the double robustness of the one-step estimator is presented in Table S9 in Appendix G.4.

\begin{table}[ht]
\centering\small
\captionsetup{width=.60\textwidth}
\caption{Estimation summaries of IPTW, one-step (OS), and propensity score matching (with four different two-stage matching radius multipliers) for both mRRT and mRR in case-control simulations (a) and (b). Bias, Monte Carlo standard errors (MC SE), mean bootstrap-estimated standard errors (BS SE), and coverage rates.}\label{tab:est_CC}\vspace{2mm}
\begin{tabular}{lrrrrrrc}
  \toprule
Method & $d_1$ & $d_2$ & $\hat{\psi}$ & Bias & MC SE & BS SE & Coverage Rate (\%)\\  \midrule
  \multicolumn{8}{l}{\emph{Simulation (a): Estimation of mRRT  (true value: 0.933)}}\vspace{3mm}\\ 
IPTW  &  &   & 0.933 & 0.000 & 0.094 & 0.092 & 95.60 \\  \cmidrule(lr){2-8}
\multirow{16}{*}{Matching} & \multirow{4}{*}{0.05} & 0.05 & 1.031 & 0.098 & 0.102 & 0.100 & 82.00 \\ 
&  & 0.03 & 1.004 & 0.071 & 0.102 & 0.101 & 90.60 \\ 
&  & 0.02 & 1.002 & 0.069 & 0.100 & 0.101 & 90.40 \\ 
&  & 0.01 & 1.013 & 0.079 & 0.099 & 0.101 & 87.60 \\\cmidrule(lr){2-8}
& \multirow{4}{*}{0.03}  & 0.05 & 1.002 & 0.069 & 0.103 & 0.102 & 90.20 \\
&  & 0.03 & 0.971 & 0.038 & 0.103 & 0.104 & 94.20 \\ 
&   & 0.02 & 0.967 & 0.034 & 0.101 & 0.104 & 95.80 \\ 
&   & 0.01 & 0.979 & 0.046 & 0.099 & 0.104 & 95.00 \\ \cmidrule(lr){2-8}
& \multirow{4}{*}{0.01}  & 0.05 & 0.987 & 0.054 & 0.103 & 0.103 & 93.00 \\ 
&  & 0.03 & 0.952 & 0.019 & 0.104 & 0.106 & 96.20 \\ 
&  & 0.02 & 0.943 & 0.010 & 0.102 & 0.108 & 96.00 \\ 
&  & 0.01 & 0.944 & 0.011 & 0.100 & 0.109 & 98.00 \\  \cmidrule(lr){2-8}
& \multirow{4}{*}{0.005}  & 0.05 & 0.986 & 0.053 & 0.103 & 0.103 & 93.40 \\ 
&  & 0.03 & 0.950 & 0.017 & 0.104 & 0.106 & 96.20 \\ 
& & 0.02 & 0.940 & 0.007 & 0.102 & 0.108 & 95.80 \\ 
& & 0.01 & 0.937 & 0.004 & 0.100 & 0.110 & 97.60 \\   \midrule
  \multicolumn{8}{l}{\emph{Simulation (b): Estimation of conditional RR/mRR (true value: 0.770)}}\vspace{2mm}\\ 
  Logistic &  &   & 0.708  & 0.063 & 0.060 & 0.059 & 80.60 \\  
IPTW  &  &   & 0.777  & 0.006 & 0.066 & 0.066 & 93.60 \\  
OS  &  &   &  0.786 & 0.015 & 0.073 & 0.078 & 94.60  \\
   \bottomrule
\end{tabular}
\end{table}

\section*{Application}
To demonstrate the application of this methodology using real CC study data, we utilized data from the PRevention of OVarian cancer in Quebec (PROVAQ) study to estimate the effect of regular aspirin use on ovarian cancer risk. PROVAQ is a population-based CC study conducted in Montreal, Canada from 2011 to 2016.\cite{koushik2017hormonal} This study recruited female Canadian citizens aged 18--79 years who spoke French or English and resided in the Greater Montreal area. A total of 498 incident ovarian cancer cases (borderline and invasive) were recruited from seven hospitals (78\% participation rate), and 908 controls were randomly selected from the Quebec List of Electors (56\% participation). Controls were frequency-matched to cases by 5-year age strata and region. In examining a short survey given to non-participants, it was noted that participation among both cases and controls was associated with age and educational attainment. Participants reported information on sociodemographic characteristics, lifestyle, reproductive history, and medical history, including lifetime aspirin use.\cite{sarr_nonopioid_analgesic} Regular aspirin use was defined as use of at least one tablet per week for at least six consecutive months.
Overall, 15.3\% of participants reported regular aspirin use (13.7\% among cases and 16.2\% among controls). Descriptive characteristics of the study population by case status are presented in Table S10 in Appendix H.1. 

We implemented IPTW and propensity score matching for the estimation of the mRRT, while logistic regression, IPTW and the one-step doubly robust method were used for the mRR. The control-fitted propensity score model included age, education, the interaction between them, body mass index, history of endometriosis, history of menopausal hormone therapy, duration of oral contraceptive use, pack years of smoking, alcohol consumption, and regular use of other non-opioid analgesics. 
For matching, the radii were defined as 0.2 times the standard deviation of the logit-transformed propensity scores among controls in stage one and among non-regular aspirin users in stage two after the evaluation of different radius multipliers (see Table S12 in Appendix H). Under this specification, all cases who did not regularly use aspirin were successfully matched during the second stage, and among the options explored, this provided the best overall balance. Thus, the post-matching sample size was equal to the original. To reduce potential model misspecification, we also estimated the nuisance functions (propensity score and outcome probabilities) for IPTW and the one-step estimator using Super Learner (SL),
\cite{van2007SL} with random forest (\texttt{SL.randomForest}), generalized linear models (\texttt{SL.glm}), generalized additive models (\texttt{SL.gam}), and LASSO (\texttt{SL.glmnet}). When using SL, the estimated treatment and outcome probabilities were truncated at [0.001, 0.999] and [0.05, 0.95]. 

Diagnostic results of the propensity score overlap are presented in Figures S5-S7 in Appendix H.2. After weighting or matching, the distributions of propensity scores of treated and untreated controls were harmonized. The covariate SMDs (Table S11) were generally reduced after adjustment, with matching achieving overall balance for the mRRT. For the mRR, IPTW with the parametric method also improved balance, whereas SL-based IPTW led to less consistent balance. Second stage matching between untreated cases and untreated controls did not meaningfully improve balance, as the post-matching SMD seemed to be similar to the unweighted values and remained unbalanced for several covariates. 

Table \ref{tab:est_app} presents the estimated mRRT and mRR. All mRRT estimates suggested effect sizes of 0.75-0.78 with matching and IPTW with SL producing bootstrap confidence intervals that exclude the null. For the mRR, point estimates varied between methods relying on parametric models vs. SL. Logistic regression, IPTW, and the one-step method using generalized linear models gave comparable point estimates around 0.77-0.78, though IPTW and one-step had higher standard errors. However, when SL was used in IPTW and the one-step method, the point estimates decreased. For the one-step estimator, the usage of SL led to greatly increased standard errors. This was due to values of the propensity score $\pi_{aN}$ and of $1-\mu_{aN}$ close to zero, leading to large weights. Truncation of these probabilities controlled the standard error and resulted in point estimates of roughly 0.71-0.72 for both IPTW and one-step estimators with SL, though only the IPTW confidence intervals excluded the null.

\begin{table}[]
   \centering
\captionsetup{width=0.42\textwidth, font=footnotesize, skip=2pt}
\caption{Estimated mRR and mRRT in the PROVAQ study.  BS SE: Mean bootstrap-estimated standard error.}\label{tab:est_app}
\resizebox{1\width}{!}{%
\begin{tabular}{lrrc}
  \toprule
Method & Estimate & BS SE & 95\% CI\\  \midrule
  \multicolumn{4}{l}{\emph{Analysis (a): Estimation of mRRT  }}\vspace{1mm}\\ 
  IPTW  & 0.753 & 0.134 & [0.474, 1.007] \\  
  IPTW\_SL$_{0.001}$  & 0.777 & 0.125 & [0.469, 0.942] \\ 
  Matching &  0.758 & 0.137 & [0.460, 0.990]\\ \midrule
  \multicolumn{4}{l}{\emph{Analysis (b): Estimation of conditional RR/mRR }}\vspace{1mm}\\ 
  Logistic & 0.765 & 0.172 & [0.544, 1.067] \\
  IPTW  & 0.771 & 0.289 & [0.550, 1.627] \\
  IPTW\_SL$_{0.001}$ & 0.716 & 0.125 & [0.467, 0.942] \\
  IPTW\_SL$_{0.05}$ & 0.714 & 0.116 & [0.460, 0.897] \\
OS  &  0.781 & 0.315 & [0.506, 1.634]  \\
OS\_SL$_{0.001}$ & 0.674 & 0.617 & [0.082, 2.782] \\
OS\_SL$_{0.05}$ & 0.718 & 0.202 & [0.597, 1.404] \\
   \bottomrule
\end{tabular}}
\captionsetup{font=footnotesize}
\caption*{\textit{Notes:} IPTW\_SL and OS\_SL denote estimators in which nuisance functions were estimated using machine learning via Super Learner (SL.glm, SL.gam, SL.glmnet, SL.randomForest), where the subscripts ${0.001}$ and  ${0.05}$ indicate the truncation intervals [0.001, 0.999] and [0.05, 0.95], respectively. }
\end{table}

\section*{Discussion}

Logistic regression of the outcome on covariates and exposure is the standard approach in CC analysis. The rare outcome assumption that we make also implies that the usual adjusted logistic regression of the outcome approximately estimates risk ratios. However, logistic regression can only estimate populational effects when omitting interaction terms between exposure and any covariate; this omission may result in model misspecification, potentially biasing the resulting estimation.\cite{schnitzer2022estimands} Propensity score methods may be beneficial in that they are agnostic to outcome model specification and thus to treatment effect heterogeneity. While the doubly robust estimator requires estimation of the conditional expectation of the outcome in the sample, the double robustness property also allows for nice asymptotic properties while using machine learning, which avoids bias due to model misspecification.\cite{jiangTNDDR}

One limitation of the doubly robust estimator was increased variance in the presence of estimated conditional sample probabilities of the outcome close to one ($\mu_{aN}(\boldsymbol C) \approx 1$). This did not impact the IPTW estimator, which had lower variance both in the simulation study and application.  
We note that this variance inflation issue due to $\mu_{aN}(\boldsymbol C) \approx 1$ is specific to the CC sampling framework and is not a limitation of doubly robust methods in prospective designs. 

We broadly recommend applying our propensity score overlap and covariate balance metrics when conducting CC analyses with propensity scores. Note that these differ from the standard diagnostics used for prospective designs. 

Our application dataset used incidence density sampling, where controls are sampled among individuals at risk at the time a case occurs, and sampled conditional on their covariates in order to replicate the distribution observed in cases. While we did not explicitly investigate incidence density sampling theoretically or in the simulation study, we note that it satisfies the requirement that controls are selected independently of treatment conditional on covariates. Thus, we believe that our methods apply directly to this context.

Past work has demonstrated that CC designs that match controls to cases can reduce estimation variance.~\cite{rose2009match} 
Our convergence proof for the proposed two-step propensity score matching estimator suggests that the matching estimator implicitly constructs weights approximating treatment-odds weighting within propensity score neighborhoods, while the matching radius serves as a tuning parameter that may improve finite-sample stability. 
Future work could explore alternative balancing-weight implementations for the mRRT, develop doubly robust and targeted estimators for the mRRT, integrate matched CC designs into our framework and extend the methodology to multilevel categorical exposures.

\vskip 8pt
\section*{Conflicts of interest} The authors declare no potential conflict of interests.
\vskip 8pt
\section*{Funding} This work was supported by an NSERC Discovery Grant. AK holds the McGill University Chair in Community Cancer Care at St. Mary's. DT is supported by a research career award from the Fonds de recherche du Québec -- Santé (\url{https://doi.org/10.69777/379793}, \url{https://doi.org/10.69777/312198}). MES holds a tier 2 Canada Research Chair in Machine Learning and Causal Inference. 
\vskip 8pt
\section*{Data availability statement} R code used to generate the simulated data and conduct the simulation studies will be made available on GitHub (to be uploaded) upon publication. The data used in the applied analysis are not publicly available because of privacy, ethical, and data-use restrictions. Access to these data may be requested from anita.koushik@mcgill.ca, and can be shared contingent on reasonable request and institutional approval.

\clearpage
\bibliographystyle{ama}
\bibliography{references}
\clearpage

%%%%%%%%%%%%%%%%%%%%%%%%%%%%%%%%%%%%%%%%%%%%%%%%%%%%%
%%%%%%%%%%%%%%%%%%%%%%%%%%%%%%%%%%%%%%%%%%%%%%%%%%%%%

\begin{center}
    \textbf{\Large{Appendix for ``Causal Inference via Propensity Scores for Case-control Studies''}}
    \\\vspace{0.8cm}
    by Yan Liu, Anita Koushik, Philippe Boileau,
Cong Jiang, Miceline M\'esidor, Claudia Waddingham,\\
Denis Talbot, and Mireille E. Schnitzer
\end{center}
\appendix
\section{Proof of inverse probability weighted identifiability for effects in the complete population}
For any of the three case-control (CC) designs, we assume that if $Y=1$ then $S=1$, that is, the presence of the outcome implies that the individual is eligible for study inclusion. We also assume  that $\mathbb{P}_{CC}(A=a\mid Y=0, \boldsymbol{C})= \mathbb{P}(A=a\mid \boldsymbol{C})$. Then we can identify the following up to the constant $q=\mathbb{P}(S=1)$:
\begin{align*}
&\mathbb{E}_{CC}\left\{Y \mathbb{I}(A=a) q \middle/ \mathbb{P}(A=a\mid \boldsymbol{C})\right\}, \\
&=\mathbb{E}_{CC}\left\{Y \mathbb{I}(A=a)  \middle/ \mathbb{P}(A=a\mid \boldsymbol{C})\right\}\mathbb{P}(S=1), \\
&=\mathbb{E}\left\{Y \mathbb{I}(A=a)  \middle/ \mathbb{P}(A=a\mid \boldsymbol{C})\mid S=1\right\}\mathbb{P}(S=1), \\
&=\mathbb{E}\left\{YS\mathbb{I}(A=a) \middle/ \mathbb{P}(A=a\mid \boldsymbol{C})\right\},\\
&=\mathbb{E}\left\{Y\mathbb{I}(A=a) \middle/ \mathbb{P}(A=a\mid \boldsymbol{C})\right\}, \text{ which is the usual IPTW g-formula for the treatment-specific mean outcome;}\\
&=\mathbb{E}\{\mathbb{P}(Y=1\mid A=a, \boldsymbol{C}) \}.
\end{align*}

\section{Proof of inverse probability weighted identifiability for effects in the treated population}
This proof is identical to the previous except involves constant $q_1=\mathbb{P}(S=1\mid A=1)$. The following is identifiable up to $q_1$ under the assumption that the propensity scores can be estimated using control data.
\begin{align*}
 &\mathbb{E}_{CC}\left\{Y (1-A) q_1 \mathbb{P}(A=1\mid \boldsymbol{C})\middle/ \mathbb{P}(A=0\mid \boldsymbol{C}) \mid A=1\right\},\\
 &=\mathbb{E}_{CC}\left\{Y (1-A)  \mathbb{P}(A=1\mid \boldsymbol{C})\middle/ \mathbb{P}(A=0\mid \boldsymbol{C}) \mid A=1\right\}\mathbb{P}(S=1\mid A=1),\\
 &=\mathbb{E}\left\{Y (1-A)  \mathbb{P}(A=1\mid \boldsymbol{C})\middle/ \mathbb{P}(A=0\mid \boldsymbol{C}) \mid A=1, S=1\right\}\mathbb{P}(S=1\mid A=1),\\
 &=\mathbb{E}\left\{Y S (1-A)  \mathbb{P}(A=1\mid \boldsymbol{C})\middle/ \mathbb{P}(A=0\mid \boldsymbol{C}) \mid A=1\right\},\\
 &=\mathbb{E}\left\{Y (1-A)  \mathbb{P}(A=1\mid \boldsymbol{C})\middle/ \mathbb{P}(A=0\mid \boldsymbol{C}) \mid A=1\right\}, \text{ the usual IPTW formula for the average treatment effect in the treated;}\\
 &=\mathbb{E}\{\mathbb{P}(Y=1\mid A=0, \boldsymbol{C})\mid A=1 \}. 
\end{align*}

\section{Test-negative design}
The TND specifically aims to estimate vaccine effectiveness against an infectious disease. Such a study recruits patients seeking care for  symptoms that resemble the disease of interest (target disease) who then undergo testing to confirm whether they are infected with the target disease. To define a control group, the TND requires that symptoms resembling those of the target disease may also arise due to some other pathogen. We denote $W_1$ as the presence of symptoms due to the target pathogen and $W_0$ as symptoms due to another pathogen. Neither $W_1$ nor $W_0$ are known at the time of recruitment. Thus, we use $W$ to indicate the presence of target-disease-like symptoms  due to either infection, i.e. $W = \max\{W_1, W_0\}$, which is observed. We define $S$ as care-seeking for symptoms (so that $S=1$ also implies the presence of symptoms $W=1$) which represents the primary inclusion criteria of the TND. 
The outcome of interest, care-seeking for the target disease, is given as $Y = \mathbb{I}(W_1=1, S=1)$.  We again define $q = \mathbb{P}(S=1)$ to be the probability or prevalence of the inclusion criterion. Thus, while the complete data are defined as $(\boldsymbol{C},A,W_1,W_0,S)$, we observe only $N$ i.i.d. draws from $(\boldsymbol{C},A, Y)\mathbb{I}(S=1)$, that is, only confounders, vaccination status, and case ($Y=1$) or control ($Y=0$) status in the TND sample.

 The absence of co-infections ($W_1=1$ and $W_0=1$ simultaneously) in addition to our DAG implies control exchangeability.\cite{jiangTNDDR} 
We have shown previously~\cite{schnitzer2022estimands} that under this DAG, control exchangeability, and positivity s.t., $0<\mathbb{P}(A=a\mid \boldsymbol{C})<1$ almost surely, the marginal risk ratio $\psi_{mRR}$ 
can be identified through the same IPTW formula as in Equation (2) in manuscript.

\section{Proof of convergence of the matching estimator}
This proof applies to a CC where the ratio of controls to cases is held constant as sample size increases or to a test-negative design (TND) (where this ratio is not controlled by the design). We assume control exchangeability or rare outcome and  a weaker positivity condition that $P(A=1\mid \boldsymbol{C})<1$ almost surely. To arrive at a causal parameter which is the analogue of the mRR, we additionally assume consistency ($A=a\Rightarrow Y=Y^{(a)}$ for $a=(0,1)$) and one-sided conditional exchangeability $Y^{(0)} \perp A \mid \boldsymbol{C}$.

\begin{theorem}\label{matchingconv}
As $N\rightarrow \infty$, $\hat{\psi}_{mRRT} \rightarrow_{\mathbb{P}_{TND}} \psi_{mRRT}$.
\end{theorem}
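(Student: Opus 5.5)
The plan is to show that the denominator of $\hat{\psi}^{PSM}_{mRRT}$, after normalization by $\sum_j A_j$, converges in probability to $q_1\,\mathbb{E}\{\mathbb{P}(Y=1\mid A=0,\boldsymbol{C})\mid A=1\}$. The numerator $\sum_j Y_jA_j/\sum_j A_j$ converges to $\mathbb{E}_{CC}(YA)/\mathbb{P}_{CC}(A=1)=q_1^{-1}\cdots$. More precisely, it converges to $\mathbb{P}(Y=1\mid A=1)/q_1^{\ast}$ with the same constant, so that the constants cancel in the ratio by the identification results of Appendices A and B. The numerator is handled directly by the weak law of large numbers under $\mathbb{P}_{TND}$ (or $\mathbb{P}_{CC}$ with a fixed case/control ratio) and the continuous mapping theorem. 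All the work is in the matched denominator. I will treat the estimated propensity score as converging uniformly to $\pi(\boldsymbol{C})$, and the radii $\rho_1,\rho_2$ as shrinking to zero while the expected number of neighbours within each radius diverges (e.g.\ $\rho_k\to0$, $N\rho_k\to\infty$). This is the standard kernel/radius-matching regime, and I will state it as an added regularity condition.

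\emph{Step 1 (first-stage weights as odds weights).} For an untreated control $l$ with score $p=\pi(\boldsymbol{C}_l)$, $m_l=\sum_{j}\mathbb{I}(l\in L_j)/|L_j|$. Here $|L_j|\approx N_0\,\rho_1\,f_{0}^{ctrl}(p)$, where $f_a^{ctrl}$ is the density of the logit score among controls with $A=a$. The number of treated controls within $\rho_1$ of $l$ is $\approx N_0\,\rho_1\,f_1^{ctrl}(p)$. Hence $m_l\to_{\mathbb{P}} f_1^{ctrl}(p)/f_0^{ctrl}(p)$. By control exchangeability, $\mathbb{P}(A=1\mid Y=0,\pi(\boldsymbol{C})=p)=p$, so by Bayes this ratio equals $\frac{p}{1-p}\cdot\frac{\mathbb{P}_{CC}(A=0,Y=0)}{\mathbb{P}_{CC}(A=1,Y=0)}$, the treatment-odds weight up to a constant. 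Positivity $\pi<1$ ensures $f_0^{ctrl}>0$ wherever $f_1^{ctrl}>0$, so with probability tending to one the forced nearest-neighbour rule is not needed.

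\emph{Step 2 (donor weights).} For an untreated case $k$ with score $p$, $m_k$ is the average of $m_l$ over untreated controls within $\rho_2$ of $p$. By a local LLN together with continuity of $f_1^{ctrl}/f_0^{ctrl}$ in $p$, $m_k\to \frac{p}{1-p}\,c$ with $c=\mathbb{P}_{CC}(A=0,Y=0)/\mathbb{P}_{CC}(A=1,Y=0)$, uniformly on compact score sets. This requires overlap of untreated cases with untreated controls, which again follows from $\pi<1$ and the absolute continuity of cases w.r.t.\ controls implied by the sampling design.

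\emph{Step 3 (assemble).} Then
\[
\frac{1}{N}\sum_j Y_j(1-A_j)m_j=\frac{c}{N}\sum_j Y_j(1-A_j)\frac{\pi(\boldsymbol{C}_j)}{1-\pi(\boldsymbol{C}_j)}+o_{\mathbb{P}}(1)\to c\,\mathbb{E}_{CC}\Big\{Y(1-A)\frac{\pi(\boldsymbol{C})}{1-\pi(\boldsymbol{C})}\Big\}.
\]
Also $c\,\mathbb{P}_{CC}(A=1,Y=0)$ relates the treated-control count to $\mathbb{P}_{CC}(A=1)$ under the rare-outcome/control-exchangeability identity $\mathbb{P}(A=1)=\mathbb{P}_{CC}(A=1\mid Y=0)$. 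Dividing by $\sum_jA_j/N$ and applying the identity in Appendix B, the denominator limit is the IPTW functional for $\mathbb{E}\{\mathbb{P}(Y=1\mid A=0,\boldsymbol{C})\mid A=1\}$, multiplied by the same constant as the numerator. Consistency and one-sided exchangeability then give the causal interpretation. Slutsky yields the theorem.

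The main obstacle is Step 1–2: making the convergence of the random, data-dependent weights $m_l$ and $m_k$ rigorous uniformly, including the error from the estimated $\pi_N$ inside the distances and the dependence created by matching with replacement. I would first attempt this by writing $m_l$ as a ratio of two Nadaraya--Watson-type uniform-kernel sums and invoking standard uniform consistency of kernel density estimators. The estimated score is absorbed by a Lipschitz perturbation argument, and averaging errors are controlled via a bounded-weights condition, e.g.\ $\pi$ bounded away from $1$ on the support.
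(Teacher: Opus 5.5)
\textbf{There is a genuine error in Step 1, and Step 3 then relies on a cancellation that does not exist.}

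\emph{The error.} You define $f_a^{ctrl}$ as the density of the score among controls with $A=a$. Then the local counts must carry the group sizes. The correct approximations are $|L_j|\approx 2\rho_1\, n_0^{ctrl} f_0^{ctrl}(p)$ and, for the number of treated controls within $\rho_1$ of $l$, $\approx 2\rho_1\, n_1^{ctrl} f_1^{ctrl}(p)$, where $n_a^{ctrl}=|S_a^{ctrl}|$. You used $N_0\rho_1 f_a^{ctrl}(p)$ for both. Your Bayes computation is correct, but applied to the ratio of counts it gives
\[
m_l\to\frac{\mathbb{P}_{CC}(A=1\mid Y=0)\,f_1^{ctrl}(p)}{\mathbb{P}_{CC}(A=0\mid Y=0)\,f_0^{ctrl}(p)}=\frac{\mathbb{P}_{CC}\{A=1\mid Y=0,\pi(\boldsymbol{C})=p\}}{\mathbb{P}_{CC}\{A=0\mid Y=0,\pi(\boldsymbol{C})=p\}}=\frac{p}{1-p},
\]
with no factor $c$.

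You can confirm this with the algebraic identity $\sum_{l\in S_0^{ctrl}}m_l=|S_1^{ctrl}|$, which holds because each treated control distributes total weight one. If $m_l\approx c\,\pi(\boldsymbol{C}_l)/\{1-\pi(\boldsymbol{C}_l)\}$, control exchangeability would give $N^{-1}\sum_{l}m_l\to c\,\mathbb{P}_{CC}(A=1,Y=0)=\mathbb{P}_{CC}(A=0,Y=0)$. But $|S_1^{ctrl}|/N\to\mathbb{P}_{CC}(A=1,Y=0)$, so this is a contradiction unless $c=1$.

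\emph{Why Step 3 does not rescue it.} As written, the spurious $c$ propagates through Step 2 into the denominator. Your Step 3 sentence does not remove it, because the numerator $\sum_jY_jA_j/\sum_jA_j$ involves no matching weights. Your argument would therefore deliver $\hat{\psi}^{PSM}_{mRRT}\to\psi_{mRRT}/c$, which is roughly $\psi_{mRRT}/2.6$ in simulation (a). The identity $\mathbb{P}(A=1)=\mathbb{P}_{CC}(A=1\mid Y=0)$ that you invoke is also not implied by control exchangeability, which is conditional on $\boldsymbol{C}$; it generally fails in the TND.

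\emph{The fix.} Once Step 1 is corrected, Steps 2 and 3 are clean. You get $m_k\to p/(1-p)$, and the matched denominator equals $N^{-1}\sum_jY_j(1-A_j)\pi(\boldsymbol{C}_j)/\{1-\pi(\boldsymbol{C}_j)\}$ up to $o_{\mathbb{P}}(1)$. The Appendix B identity then yields $\psi_{mRRT}$ with nothing left to cancel. Separately, the case--control overlap you need in Step 2 should be stated as an assumption. In the TND it is not a consequence of the sampling design.

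\textbf{Comparison with the paper's route.} With that repair, your argument is a legitimate alternative to the paper's. The paper follows Li and Greene: it assumes the propensity score takes finitely many values $p_k<1$ and sets $\rho_1=0$. Each untreated control's weight is then an exact within-stratum ratio of treated to untreated control counts, and its limit $p_k/(1-p_k)$ follows from a per-stratum law of large numbers plus control exchangeability. The paper then evaluates the denominator through iterated expectations and one-sided exchangeability, rather than by reduction to the IPTW functional.

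Your local-averaging argument covers continuous scores and makes the link to $\hat{\psi}^{IPTW}_{mRRT}$ explicit. The costs are bandwidth and density-regularity conditions, and the fact that you analyze a shrinking-radius regime. The algorithm as implemented instead uses $\rho_a=d_a\sigma_a$ with fixed $d_a$. The paper's finite-support device is what lets it sidestep smoothing bias altogether.
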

\begin{proof}
In the numerator of $\hat{\psi}_{mRRT}$, we have that 
\begin{align}
    \left.\sum_{i=1}^N a_i y_i \middle/ \sum_{i=1}^N a_i\right. \rightarrow_{\mathbb{P}_{CC}} \quad &\mathbb{E}(AY\mid S=1)/ \mathbb{P}(A=1\mid S=1),\notag\\
    =&\mathbb{E}(Y\mid A=1, S=1),\notag\\
    =&\frac{\mathbb{P}(Y=1,S=1\mid A=1)}{\mathbb{P}(S=1\mid A=1)},\notag\\
    =&\frac{\mathbb{P}(Y=1\mid A=1)}{\mathbb{P}(S=1\mid A=1)}, \text{ because } Y=1 \text{ implies } S=1;\notag\\
    =&\mathbb{P}(Y=1\mid A=1) / q_1,\notag\\
    =&\mathbb{P}(Y^{(1)}=1\mid A=1) / q_1\text{ since when }A=1\text{ we observe } Y^{(1)}.\notag
\end{align}

For the denominator, following Li and Greene~\cite{LiGreene_matchingweights}, we now suppose that the propensity score can only take on finite values $\{p_k, k=1,...,K\}$ where each $p_k<1$. We let the radius be $\rho_1=0$. Let $M_{k}$ be the number of times an arbitrary untreated control participant is matched to any treated control participant with propensity score $p_k$.

First, we show that the mean weight within each propensity score stratum converges to the target weight times the probability mass of the stratum. 
\begin{align}
&\mathbb{E}\{M_k\mid \boldsymbol{C}=\boldsymbol{c}\},\notag\\
&= \mathbb{E}\{M_k\mid \boldsymbol{C}=\boldsymbol{c},S=1, Y=0\},\text{ by outcome rarity [CC] or control exchangeability [TND];}\notag\\
   &=  \mathbb{E}\{ M_k\mid \pi_N(\boldsymbol{c})=p_k, \boldsymbol{C}=\boldsymbol{c}, S=1,Y=0\}\mathbb{P}\{ \pi_N(\boldsymbol{c})=p_k \mid \boldsymbol{C}=\boldsymbol{c}, S=1, Y=0\}, \notag\\
   &\text{by the law of total expectation and due to the definition of $M_k$ only taking non-zero values within the stratum $p_k$;}\notag\\
   &= \frac{\mathbb{P}\{ A=1\mid \pi_N(\boldsymbol{c})=p_k, \boldsymbol{C}=\boldsymbol{c}, S=1, Y=0\}}{\mathbb{P}\{ A=0\mid \pi_N(\boldsymbol{c})=p_k, \boldsymbol{C}=\boldsymbol{c}, S=1, Y=0\}}\mathbb{P}\{ \pi_N(\boldsymbol{c})=p_k \mid \boldsymbol{C}=\boldsymbol{c}, S=1, Y=0\},\notag\\
  &\text{since the expected weight is the probability that there exists a treated (control) participant in the stratum times the inverse density of}\notag\\
  &\text{untreated (control) participants in the stratum;}\notag\\
  &=  \frac{\mathbb{P}\{ A=1\mid \pi_N(\boldsymbol{c})=p_k, \boldsymbol{C}=\boldsymbol{c}\}}{\mathbb{P}\{ A=0\mid \pi_N(\boldsymbol{c})=p_k, \boldsymbol{C}=\boldsymbol{c}\}}\mathbb{P}\{ \pi_N(\boldsymbol{c})=p_k \mid \boldsymbol{C}=\boldsymbol{c}\}, \text{ by control exchangeability/rare outcome;}\notag\\
  &\rightarrow_N \frac{\mathbb{P}\{ A=1\mid  \boldsymbol{C}=\boldsymbol{c}\}}{\mathbb{P}\{ A=0\mid  \boldsymbol{C}=\boldsymbol{c}\}}\mathbb{P}\{ \pi(\boldsymbol{c})=p_k \mid \boldsymbol{C}=\boldsymbol{c}\}, \text{ assuming consistency of the propensity score estimation, by properties of the}\notag\\
  &\text{propensity score.}\notag
\end{align}
This means that for an untreated case observation $j$ that uses untreated control observation $i^*$'s weight, the large-sample mean of the weight will approximate $\pi(\boldsymbol{C})/\{1-\pi(\boldsymbol{C})\}$.

Now denote by $m_{j,k}=\{m_j \text{ if  }\pi_N(C_j)=p_k; 0\text{ otherwise}\}$ the weight of individual $j$ corresponding to the propensity score stratum $k$;  $M_k$ is the corresponding random variable.  Note that if we assume that the form of the propensity score is known, all of the randomness of $M_k$ comes from $\boldsymbol{C}$.
The matching estimator in the denominator converges to
\begin{align}
 &\left.\sum_{j=1}^N (1-a_j)y_j m_j\middle/ \sum_{j=1}^N a_j 
 \right.  \quad= \left.\sum_{j=1}^N \sum_{k=1}^K(1-a_j)y_j m_{j,k} \middle/ \sum_{j=1}^N a_j \right. \notag \\ 
 &\rightarrow_N%^{\mathbb{P}_{TND}}  
 \frac{\mathbb{E}\left\{\sum_{k=1}^K (1-A)YM_{k}\mid S=1\right\}}{\mathbb{P}(A=1\mid S=1)},\notag\\
  &=  
 \frac{\mathbb{E}\left\{\sum_{k=1}^K (1-A)YM_{k}\mid S=1\right\}\mathbb{P}(S=1)}{ \mathbb{P}(S=1\mid A=1)\mathbb{P}(A=1)},\notag\\
   &=  
 \frac{\sum_{k=1}^K\mathbb{E}\left\{ (1-A)YSM_{k}\right\}}{ \mathbb{P}(S=1\mid A=1)\mathbb{P}(A=1)},\notag\\
   &=  
 \frac{\sum_{k=1}^K \mathbb{E}\left\{(1-A)YM_{k}\right\}}{ \mathbb{P}(S=1\mid A=1)\mathbb{P}(A=1)},\text{ because the outcome }Y=1\text{ implies } S=1,\notag\\
  &=\frac{\sum_{k=1}^K \mathbb{E}\left\{(1-A)Y^{(0)}M_{k}\right\}}{ \mathbb{P}(S=1\mid A=1)\mathbb{P}(A=1)}, \text{because when }A=0\text{, we observe }Y^{(0)},\notag\\
 &= \frac{\sum_{k=1}^K\mathbb{E}\left[  \mathbb{E}\{Y^{(0)}(1-A)M_{k} \mid \boldsymbol{C}\}\right] }{\mathbb{P}(S=1\mid A=1)\mathbb{P}(A=1)},\notag\\
 &=\frac{\mathbb{E}\left[ \mathbb{E}\{Y^{(0)}\mid \boldsymbol{C}\} \sum_{k=1}^K\mathbb{E}\{(1-A)M_{k} \mid \boldsymbol{C}\}\right] }{\mathbb{P}(S=1\mid A=1)\mathbb{P}(A=1)},\text{ because $Y^{(0)}$ is assumed independent of treatment conditional on covariates;}\notag\\
  &=\frac{\mathbb{E}\left[ \mathbb{E}\{Y^{(0)}\mid \boldsymbol{C}\} \sum_{k=1}^K\mathbb{P}(A=0\mid M_{k}, \boldsymbol{C}) \mathbb{E}(M_{k}\mid \boldsymbol{C})\right] }{\mathbb{P}(S=1\mid A=1)\mathbb{P}(A=1)},\notag\\
 &=\frac{\mathbb{E}\left[ \mathbb{E}\{Y^{(0)}\mid \boldsymbol{C}\} \sum_{k=1}^K\mathbb{P}(A=0\mid  \boldsymbol{C})\frac{\mathbb{P}(A=1\mid \boldsymbol{C})} {\mathbb{P}(A=0\mid \boldsymbol{C})}\mathbb{P}\{\pi(\boldsymbol{C})=p_k\mid \boldsymbol{C}\}\right] }{\mathbb{P}(S=1\mid A=1)\mathbb{P}(A=1)},\text{ subbing in the previous result and by properties of the}\notag\\
 &\text{propensity score,}\notag\\
 &=\frac{\mathbb{E}\left[ \mathbb{E}\{Y^{(0)}\mid \boldsymbol{C}_j\} \sum_{k=1}^K\mathbb{P}(A=1\mid \boldsymbol{C}) \mathbb{P}\{\pi(\boldsymbol{C})=p_k\mid \boldsymbol{C}\}\right] }{\mathbb{P}(S=1\mid A=1)\mathbb{P}(A=1)},\notag\\
  &=\frac{\mathbb{E}\left[ \mathbb{E}\{Y^{(0)}\mid \boldsymbol{C}\} \mathbb{P}(A=1\mid \boldsymbol{C})\right] }{\mathbb{P}(S=1\mid A=1)\mathbb{P}(A=1)},\text{ because }\sum_{k=1}^K\mathbb{P}\{\pi(\boldsymbol{C})=p_k\mid \boldsymbol{C}\}=1, \notag\\
   &=\frac{\mathbb{P}\{Y^{(0)}=1,A=1 \}  }{\mathbb{P}(S=1\mid A=1)\mathbb{P}(A=1)}, \notag\\
      &=\frac{ \mathbb{P}\{Y^{(0)}=1\mid A=1\} }{q_1}. \notag
\end{align}

\end{proof}

\section{Propensity score matching pseudocode}
Table \ref{tab:ps matching algorithm} presents the two-stage propensity score matching pseudocode for the estimation of marginal risk ratio among the treated (mRRT).

\begin{table}
    \centering\small
    \caption{\textcolor{black}{Propensity score matching pseudocode for estimation of $\psi_{mRRT}$} \label{tab:ps matching algorithm} }\color{black}
   \begin{tabular}{lllll}
\toprule
\textbf{Step} & \multicolumn{3}{l}{\textbf{Description}} \\
\midrule
\raisebox{1.5ex}{1} & \multicolumn{4}{l}{{\makecell[l]{With data structure $O=\{(Y_i, \boldsymbol{C}_i, A_i), i = 1, \cdots, N\}$, fit a propensity score model for $\pi = Pr(A=1|\boldsymbol{C})$ using  only the\\ controls ($Y=0$), then make predictions for all participants to obtain $\hat{\pi}_i; i=1,...,N$. }} }\\\midrule 
2 & \multicolumn{4}{l}{{\makecell[l]{Split data into cases ($Y=1$) and controls ($Y=0$). }} }\\
\midrule \vspace{1mm}
3 & \multicolumn{4}{l}{First stage: matching for covariate balance among the controls,} \\ \vspace{1mm}
 & \raisebox{1.5ex}{3.1} & \multicolumn{3}{l}{\makecell[l]{Compute the empirical standard deviation of the logit propensity scores among controls,\\
\(\hat\sigma^{{ctrl}} = \mathrm{SD}\{\text{logit}(\hat\pi_i): Y_i=0\}\).
      Define the radius $\rho_1=d_1\times \hat{\sigma}^{ctrl}$ where \(d_1 > 0\) is a user–chosen constant.}} \\\vspace{1mm}
 & \raisebox{1.5ex}{3.2} & \multicolumn{3}{l}{\makecell[l]{Define sets of treated and untreated groups in controls, $S_{1}^{ctrl} = \{i: Y_i=0, A_i=1\},S_{0}^{ctrl}= \{i: Y_i=0, A_i=0\}$. \\ Build a distance matrix $D^{ctrl}$ with entries $|\text{logit}(\hat{\pi}_j)- \text{logit}(\hat{\pi}_\ell)|$ for $j\in S_{1}^{ctrl}$ and $\ell\in S_{0}^{ctrl}$.}} \\\vspace{1mm}

&  3.3 & \multicolumn{3}{l}{Given radius $\rho_1$, for each treated control $j$ for $j\in S_{1}^{ctrl}$:  } \\\vspace{1mm}

 & & \raisebox{1.5ex}{3.3.1}  & \multicolumn{2}{l}{\makecell[l]{Define the match set  $L_j = \{\ell\in S_0^{ctrl}: D^{ctrl}_{j,l} \leq \rho_1\}$. If there are no eligible matches, set $L_j=\ell^*$ such that \\$D^{ctrl}_{j,\ell^*} = \text{min}(D^{ctrl}_{j,\ell})$.}}\\  \vspace{1mm}
 & & 3.3.2 & \multicolumn{2}{l}{\makecell[l]{Define a matrix $m$ with dimension $|S_{1}^{ctrl}|\times |S_{0}^{ctrl}|$, with $m_{j,\ell} = \mathbbm{1}_{(\ell\in L_j)}/\lvert L_j\rvert$. }}\\
\vspace{1mm}

& 3.4 & \multicolumn{3}{l}{Compute the weight $m_\ell$ for each untreated control as, $m_\ell=\sum_{j\in S_1^{trl}} m_{j,\ell}$.}\\
\midrule \vspace{1mm}

4 & \multicolumn{4}{l}{Second stage: donor matching,} \\ \vspace{1mm}
& \raisebox{1.5ex}{4.1} & \multicolumn{3}{l}{\makecell[l]{Compute the empirical standard deviation of the logit propensity scores among the untreated group,\\
\(\hat\sigma^{{untreat}} = \mathrm{SD}\{\text{logit}(\hat\pi_i): A_i=0\}\).
Define the radius $\rho_2=d_2\times \hat{\sigma}^{untreat}$ where $d_2 > 0$ is a user–chosen constant.}} \\\vspace{1mm}
 & \raisebox{1.5ex}{4.2}  & \multicolumn{3}{l}{\makecell[l]{Define sets of treated and untreated groups in cases, $S_{1}^{case}= \{i: Y_i=1, A_i=1\}$ and $S_{0}^{case}= \{i: Y_i=1, A_i=0\}$. \\ Build a distance matrix $D^{untreat}$ with entries $|\text{logit}(\hat{\pi}_k)- \text{logit}(\hat{\pi}_\ell)|$ for $k\in S_{0}^{case}$ and $\ell\in S_{0}^{ctrl}$.}}\\\vspace{1mm}

 &  4.3 & \multicolumn{3}{l}{Given radius $\rho_2$, then for each untreated case $k$ for $k\in S_{0}^{case}$,  } \\   \vspace{1mm}
 & & 4.3.1 & \multicolumn{2}{l}{\makecell[l]{Define the match set  $L_k = \{\ell\in S_0^{ctrl}: D^{untreat}_{k,\ell} \leq \rho_2\}$. If there are no eligible matches, set $L_k=\varnothing$. }}\\  \vspace{1mm}
 & & 4.3.2 & \multicolumn{2}{l}{\makecell[l]{ Compute the weight for each untreated case $k$, $
m_k =
\begin{cases}
0, & L_k=\varnothing,\\[6pt]
\displaystyle \frac{1}{\lvert L_k\rvert}\sum_{\ell\in L_k} m_\ell, & \lvert L_k\rvert>0.
\end{cases}
$}}\\ \midrule  \vspace{1mm}

5 & \multicolumn{4}{l}{For each treated case $t\in S_1^{case} $, define $m_t=1$.} \\  \midrule \vspace{1mm}

\raisebox{1.5ex}{6} & \multicolumn{4}{l}{\makecell[l]{The mRRT is estimated as the ratio of the weighted mean outcomes among the treated and untreated groups, i.e., \\ $\hat{\psi}_{mRRT} =  \sum_{i=1}^N Y_iA_i / \sum_{i=1}^N Y_i(1-A_i)m_i$. }} \\ 
\bottomrule
\end{tabular}
\end{table}

\section{Covariate balance checking algorithms for IPTW and propensity score matching}

Table \ref{IPTW_bc_algorithm} gives the covariate balance checking algorithms for inverse probability of treatment weighting (IPTW) in the controls. Table \ref{matching_bc_algorithm} gives the covariate balance checking algorithms for matching in two stages.

\begin{table}
    \centering\small
    \caption{\textcolor{black}{Covariate balance checking algorithm for IPTW} \label{IPTW_bc_algorithm} }\color{black}
   \begin{tabular}{lllll}
\toprule
\textbf{Step} & \multicolumn{3}{l}{\textbf{Description}} \\
\midrule\vspace{1mm}
\raisebox{1.5ex}{1} & \multicolumn{4}{l}{{\makecell[l]{With data structure $O=\{(Y_i, \boldsymbol{C}_i, A_i), i = 1, \cdots, N\}$, fit a propensity score model for $\pi = Pr(A=1|\boldsymbol{C})$ using  only \\the controls ($Y=0$), then make predictions for all participants to obtain $\hat{\pi}_i; i=1,...,N$. }} }\\\midrule \vspace{1mm}
 \raisebox{4ex}{2} & \multicolumn{4}{l}{{\makecell[l]{\vspace{2mm}Compute the IPTW weights ($w_i^{IPTW}$): \\\vspace{1mm}For mRRT: $w_i =
\mathbb{I}(A_i=1) + \mathbb{I}(A_i=0) \frac{\hat{\pi}_i}{1-\hat{\pi}_i}$,\\For mRR: $w_i = \frac{\mathbb{I}(A_i=1)}{\hat{\pi}_i} + \frac{\mathbb{I}(A_i=0)}{1-\hat{\pi}_i}$.}} }\\\midrule
\vspace{1mm}
3 & \multicolumn{4}{l}{\makecell[l]{Among the controls, define sets of treated and untreated groups, $S_{1} = \{i: Y_i=0, A_i=1\},S_{0}= \{i: Y_i=0, A_i=0\}$
}} \\\vspace{1mm}
& \raisebox{2.6ex}{3.1} & \multicolumn{3}{l}{\makecell[l]{Compute weighted means and variance:\\For each covariate $C \in \boldsymbol{C}$, calculate the weighted mean ($\bar{C}_{g}$) and weighted variance ($\hat{\sigma}^2_{g}$) for each set $g \in \{1, 0\}$:\\
$\bar{C}_{g} = \{\sum_{i \in \mathcal{S}_g} w_i C_i\}/\{\sum_{i \in \mathcal{S}_g} w_i\}$, $\qquad \hat{\sigma}^2_{g} = \{\sum_{i \in \mathcal{S}_g} w_i (C_i - \bar{C}_{g})^2\}/\{\sum_{i \in \mathcal{S}_g} w_i\}$. 
}} \\ \vspace{1mm}
& \raisebox{1.5ex}{3.2} & \multicolumn{3}{l}{\makecell[l]{Compute Weighted Standardized Mean Differences (SMD):\\$SMD = \{|\bar{C}_{1} - \bar{C}_{0}|\}\big/\{\sqrt{(\hat{\sigma}^2_{1} + \hat{\sigma}^2_{0})/{2}}\}$.
}} \\
\bottomrule
\end{tabular}
\end{table}

\begin{table}
    \centering\small
    \caption{\textcolor{black}{Covariate balance checking algorithm for propensity score matching for the mRRT} \label{matching_bc_algorithm} }\color{black}
   \begin{tabular}{lllll}
\toprule
\textbf{Step} & \multicolumn{3}{l}{\textbf{Description}} \\
\midrule\vspace{1mm}
\raisebox{1.5ex}{1} & \multicolumn{4}{l}{{\makecell[l]{With data structure $O=\{(Y_i, \boldsymbol{C}_i, A_i), i = 1, \cdots, N\}$, fit a propensity score model for $\pi = Pr(A=1|\boldsymbol{C})$ using  only \\the controls ($Y=0$), then make predictions for all participants to obtain $\hat{\pi}_i; i=1,...,N$. }} }\\\midrule \vspace{1mm}
2 & \multicolumn{4}{l}{Compute the matching weights  ($w_i^{Matching}$) for two stages: } \\ \vspace{1mm}
 & \raisebox{6.5ex}{2.1} & \multicolumn{3}{l}{\makecell[l]{First stage: define sets of treated and untreated groups among the controls, \\$S_{1}^{ctrl} = \{i: Y_i=0, A_i=1\},\quad S_{0}^{ctrl}= \{i: Y_i=0, A_i=0\}$.\vspace{1mm}\\ Based on the step 3 in the propensity score matching algorithm (Table \ref{tab:ps matching algorithm}), the weights in first stage are \\$w_i^{Matching_1} =
\begin{cases}
1, & \text{if}~i\in S_1^{ctrl},\\[6pt]
m_i, & \text{if}~i\in S_0^{ctrl}.
\end{cases}$}} \\ \cmidrule(lr){2-3}
\vspace{1mm}
& \raisebox{5,5ex}{2.2} & \multicolumn{3}{l}{\makecell[l]{Second stage: define the set of untreated cases, $S_{0}^{case} = \{i: Y_i=1, A_i=0\}$.\vspace{1mm}\\ Based on the match set $L_k$ where $k\in S_0^{case}$ defined at step 4.31 in Table \ref{tab:ps matching algorithm}, the weights in second stage are \\ $w_i^{Matching_2} =
\begin{cases}
1, & \text{if}~i\in S_0^{case},\\[6pt]
\sum_{(k:\lvert L_k\rvert>0)} \frac{\mathbbm{1}(i\in L_k)}{\lvert L_k \rvert}, & \text{if}~i\in S_0^{ctrl}.
\end{cases}$ }} \\\midrule \vspace{1mm}
3 & \multicolumn{4}{l}{\makecell[l]{Balance checking for matching in two stages based on the steps 3.1 and 3.2 in Figure \ref{IPTW_bc_algorithm}: } } \\ \vspace{1mm}
&3.1 & \multicolumn{3}{l}{\makecell[l]{For first stage: compute the weighted mean, variance and SMD between sets $S_1^{ctrl}, S_0^{ctrl}$. } } \\ 
& 3.2 & \multicolumn{3}{l}{\makecell[l]{For second stage: compute the weighted mean, variance and SMD between sets $S_0^{case}, S_0^{ctrl}$. } } \\ 
\bottomrule
\end{tabular}
\end{table}

\section{Simulation studies}

\subsection{Data-generating mechanisms for CC and TND studies}

Table \ref{tab:dg-cc} presents the data-generating mechanism for simulation scenarios (a), (b) and (c) of the CC study. The treatment assignment mechanisms for mRRT and mRR were varied to illustrate settings where each parameter would be most of interest; specifically, the mRRT is more interesting to estimate when some untreated individual have zero probability of being treated (so the mRR does not exist) but all treated individuals had a non-zero probability of being untreated. To form each dataset, 1,000 cases and 4,000 controls were randomly sampled. 

\begin{table}[H]\centering\small\captionsetup{width=.68\textwidth}
\caption{\small Data-generating mechanism of simulation scenarios (a), (b), and (c) for the CC study. Treatment assignment: A(a) for scenario (a), A(b) for scenario (b), and A(c) for scenario (c); Outcome: Y(a,b) for scenarios (a) and (b), and Y(c) for scenario (c).}\vspace{1mm}\label{tab:dg-cc}
\begin{tabular}{ll}
\toprule
\textbf{Variable} & \textbf{Generating Mechanism} \\
\midrule
$C_j,\ j = (1,2)$ & $\sim \text{Multivariate normal distribution as } \mathcal{N}\!\left(
\begin{bmatrix} 0 \\ 0 \end{bmatrix},
\begin{bmatrix}
1 & 0.3\\
0.3 & 1
\end{bmatrix}\right)$ \\[8pt]
$A (a) $ & $\sim\mathrm{Bernoulli}(p_V)$ where $p_V = \left\{
\begin{aligned}
& 0 \quad \text{if } C_1 < -1,\\
& \text{logit}^{-1}(-1 + 0.4\,C_1 + 0.7\,C_2)  \quad \text{otherwise}\\
\end{aligned}
\right.$\\[3pt]
$A (b)$ & $\sim\mathrm{Bernoulli}(p_V)$ where $p_V = \text{logit}^{-1}(-1 + 0.4\,C_1 + 0.7\,C_2)$\\[3pt]
$A (c)$ & $\sim\mathrm{Bernoulli}(p_V)$ where $p_V = \text{logit}^{-1}(-1 + 0.4\,C_1 + 0.7\,C_2+ 0.4\,C_1\,C_2)$\\[3pt]
$Y (a,b)$ & $\sim \mathrm{Bernoulli}\bigl(\text{logit}(p)=-4.95+ 0.4C_1 + 0.7C_2 + 0.7AC_1- 0.9A\bigr)$ \\
$Y (c)$ &  $\sim \mathrm{Bernoulli}\bigl(\text{logit}(p)=-5.07 + 0.4C_1 + 0.7C_2 + 0.7AC_1-0.9A\bigr)$  \\[3pt]\bottomrule
\end{tabular}
\end{table}

We also generated data from a TND. We first generated population data of sample size N= $2\times 10^6$. Each individual was assigned two baseline covariates, continuous $C_1$ and binary $C_2$. Vaccination status $A$ was then generated based on the two measured covariates. For the setting in which $\psi_{mRRT}$ is well defined, partial non-overlap was induced by setting $p_A=0$ for $C_1<-1$. We also considered a setting without this restriction, in which $\psi_{mRR}$ is defined. Individuals could become infected with a non-target pathogen ($I_0$), or the target pathogen ($I_1$). Symptom indicators ($W_0, W_1$) were generated conditional on corresponding infection status but were unobserved in the data. The composite indicator $W=max(W_0, W_1)$ representing any symptomatic presentation was observed. Hospitalization ($H$) was then simulated conditional on $W$. The TND sample consisted of 5,000 individuals randomly selected among those hospitalized ($H=1$). The observed TND data therefore has the structure $(C_1, C_2, A, Y)$ where  $Y=I_1$ indicates cases status for the target pathogen. Table \ref{tab:dg_TND} presents the data-generating mechanism for the  TND simulation study.

\begin{table}[H] 
\centering\small\captionsetup{width=.8\textwidth}
\caption{\small Data generating mechanism for the TND simulation. Treatment assignment: mRRT [A(a)] and mRR [A(b)]}\vspace{2mm}
\label{tab:dg_TND}
 \resizebox{1\width}{!}{%
 \begin{tabular}{lll}
 \\[-1.5em]
\toprule\\[-1.5em]
\textbf{Variable} & \textbf{Generating Mechanism} & \textbf{Notes}\\ \\[-1.5em]
\midrule\\[-1.2em]
$C_1$ & $\sim \mathcal{N}(0,1)$ & Measured covariate\\[2pt]
$C_2$ & $\sim \mathrm{Bernoulli}(0.4)$ & Measured covariate\\[3pt]
$A(a)$ & $\sim\mathrm{Bernoulli}(p_V)$ where $p_V = \left\{
\begin{aligned}
& 0 \quad \text{if } C_1 < -1,\\
& \text{logit}^{-1}(-1 + 0.5\,C_1 + 0.3\,C_2)  \quad \text{otherwise}\\
\end{aligned}
\right.$
 & Treatment status\\[8pt]
 $A(b)$ & $\sim\mathrm{Bernoulli}(p_V)$ where $p_V = \text{logit}^{-1}(-1 + 0.5\,C_1 + 0.3\,C_2) $
 & Treatment status\\[3pt]
$I_0$ & $\sim \mathrm{Bernoulli}\{logit(p)=-2.5 - 0.9\,C_1-1.2\,C_2\}$ &  Non-target pathogen\\[3pt]
$I_1$ & \makecell[l]{$\sim \mathrm{Bernoulli}\{logit(p)=-3.6+1.2\,C_1+1.2\,C_2-0.9\,A$\}} & Target pathogen\\[3pt]
$W_0$ & $\sim \mathrm{Bernoulli}\{logit(p)=0.6 +\,C_1\}$ for $I_0=1$ & Unobserved symptom\\[3pt]
$W_1$ & \makecell[l]{$\sim \mathrm{Bernoulli}\{logit(p)=-1.8+2\,C_1-0.8\,A\}$  for $I_1=1$ }& Unobserved symptom\\[3pt]
$W$ & $=max\{W_0, W_1)$ & Observed symptom\\[3pt]
$H$ & $\sim \mathrm{Bernoulli}\{logit(p)=1-0.5\,C_1\}$ for $W=1$  & Hospitalization\\
\bottomrule
\end{tabular}}
\end{table}

\subsection{Radius definition and adjustment}
Propensity score matching conventionally utilizes the logit of the estimated propensity score (PS) \cite{RosenbaumRubin1985}, $\Pi_i= \text{logit}(\pi_i)$, as the primary distance metric. The logit transformation is preferred because it linearizes the distance metric by stretching the boundaries near 0 and 1, ensuring a more uniform radius size across the entire propensity score distribution. The standard recommendation for the radius width ($\rho$) is set to $\rho = 0.2\times \sigma(\Pi)$ where $\sigma(\Pi)$ is the standard deviation of the logit-transformed propensity scores \cite{RosenbaumRubin1985, austin2011optimal}. 

In our two-stage propensity score matching, we defined two distinct radii based on the standard deviations of the logit transformed propensity scores: $\rho_1= d_1\times \hat{\sigma}_{\Pi^{ctrl}}, \rho_2= d_2\times \hat{\sigma}_{\Pi^{untreat}} $ where $\hat{\sigma}_{\Pi^{ctrl}} = \hat{\sigma}\{\Pi_i: Y_i=0\}, \hat{\sigma}_{\Pi^{untreat}} = \hat{\sigma}\{\Pi_i: A_i=0\}$. In a representative simulation ($N$= 5,000), these standard deviations were notably higher than those observed in the scenarios without partial non-overlap: for example, $\hat{\sigma}_{\Pi^{ctrl}}=6.58$ and $ \hat{\sigma}_{\Pi^{untreat}}=7.12$ for the CC sutdy, and 
 8.35 and 7.83, respectively, for the TND study.
The large standard deviation values reflect a highly dispersed logit distribution, driven by the extreme values produced by the one-sided positivity violation ($C_1<-1$). In such scenarios, using the conventional radius width of 0.2 would result in an excessively permissive matching threshold, potentially pairing individuals with highly dissimilar covariate profiles and undermining the goal of the matching procedure.

To identify the optimal radius width that maintains sufficient covariate balance, we evaluated a sequence of progressively tighter radius proportional constants. For the CC study, we set $d_1=(0.05, 0.03, 0.01, 0.005)$ and $d_2=(0.05, 0.03, 0.02, 0.01)$. At the upper end of this range, $d=0.05$, the resulting radius is approximately $0.33$ on the logit scale in the first stage. This tolerance would allow, for instance, an individual with a logit PS of 0 ($\pi=0.5$) to be matched to another with a logit PS of 0.33 ($\pi\approx 0.58$), representing a relatively wide radius. In contrast, if we set $d=0.01$, the implied radius is approximately 0.07 on the logit scale, allowing matches only between individuals with nearly identical propensity scores (e.g., $\pi = 0.50$ versus $\pi \approx 0.52$), thereby prioritizing covariate balance at the cost of reduced matching flexibility. For the TND study, we used $d_1=d_2=(0.05, 0.03, 0.02, 0.01)$. Using these scales, we conducted simulations under sample sizes ($N$=5,000), and presented the covariate balances and distribution overlap of logit propensity scores before and after matching.

\subsection{Diagnostic checking}

\subsubsection{Diagnostic checking in the CC simulation scenario (b) and (c)}
Figure \ref{fig: ps_cc_em_mRR_iptw} presents the propensity score distributions for mRR weighting in a single CC dataset of size $N=5,000$ from simulation scenario (b). The top row shows decent overlap before weighting and excellent overlap between untreated and treated controls after weighting. The bottom row suggests that the propensity score models, fit with control data, are extrapolating to predict values for the cases (little overlap of the blue cases above 0.75 and 0.85 on the right-hand side of each of the left and right bottom histograms, respectively). 
Figure \ref{fig: cc3_ps} presents the propensity score distributions for mRR weighting in  CC simulation scenario (c). Similarly, IPTW improved overlap between untreated and treated controls. Some propensity score extrapolation may again be necessary in this scenario.  

\begin{figure}[htbp]
 \centering
\includegraphics[width=12cm,height=8.5cm]{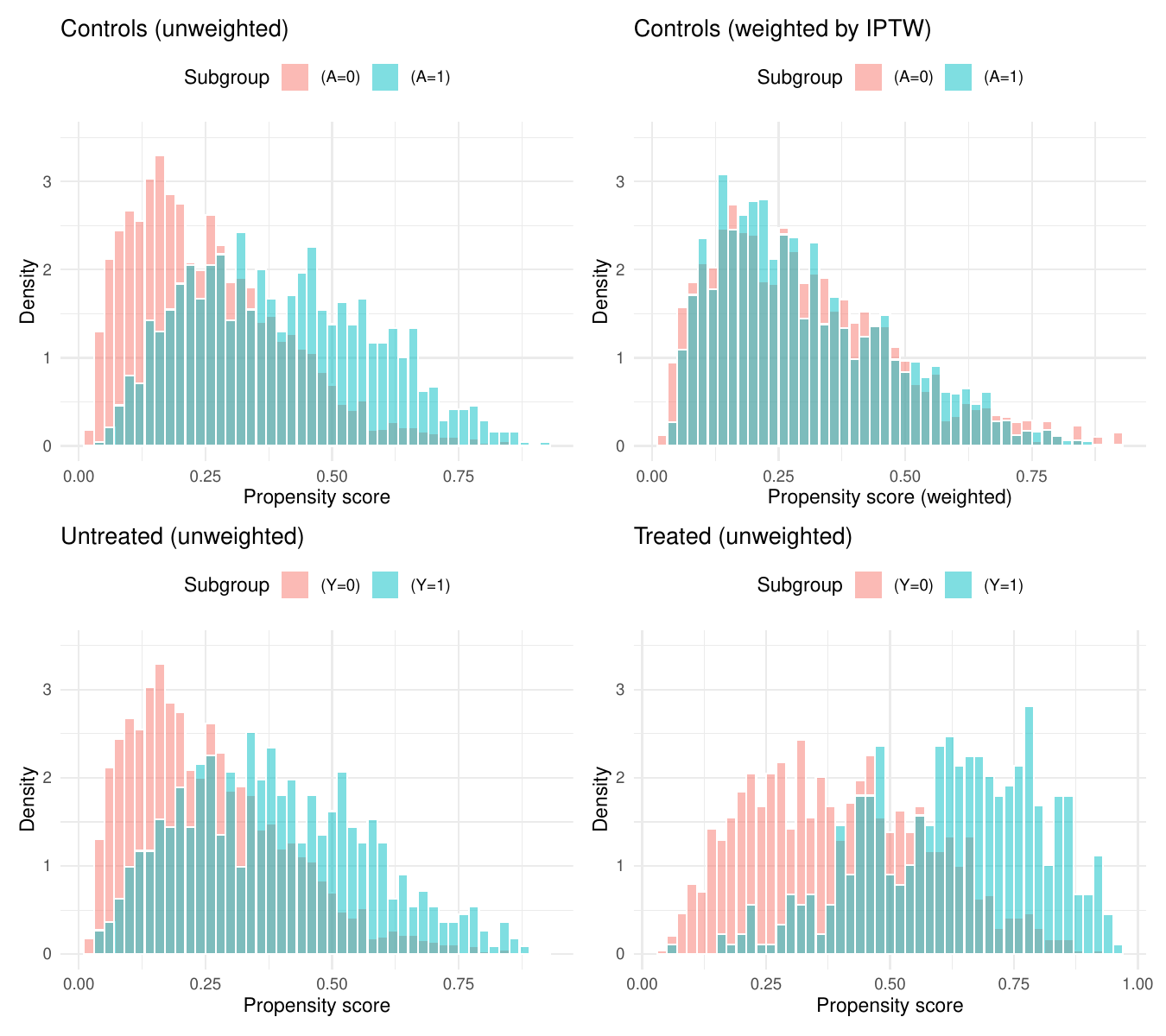}
 \caption{\small Diagnostic checks of the propensity score overlap for mRR weighting  in a single CC dataset of size $N=5,000$ in simulation scenario (b): comparison between the treated controls and untreated controls before and after IPTW weighting, between untreated cases and untreated controls, and between treated cases and treated controls. }
 \label{fig: ps_cc_em_mRR_iptw}
 \end{figure}

\begin{figure}[htbp]
 \centering
 \includegraphics[width=12cm,height=8.5cm]{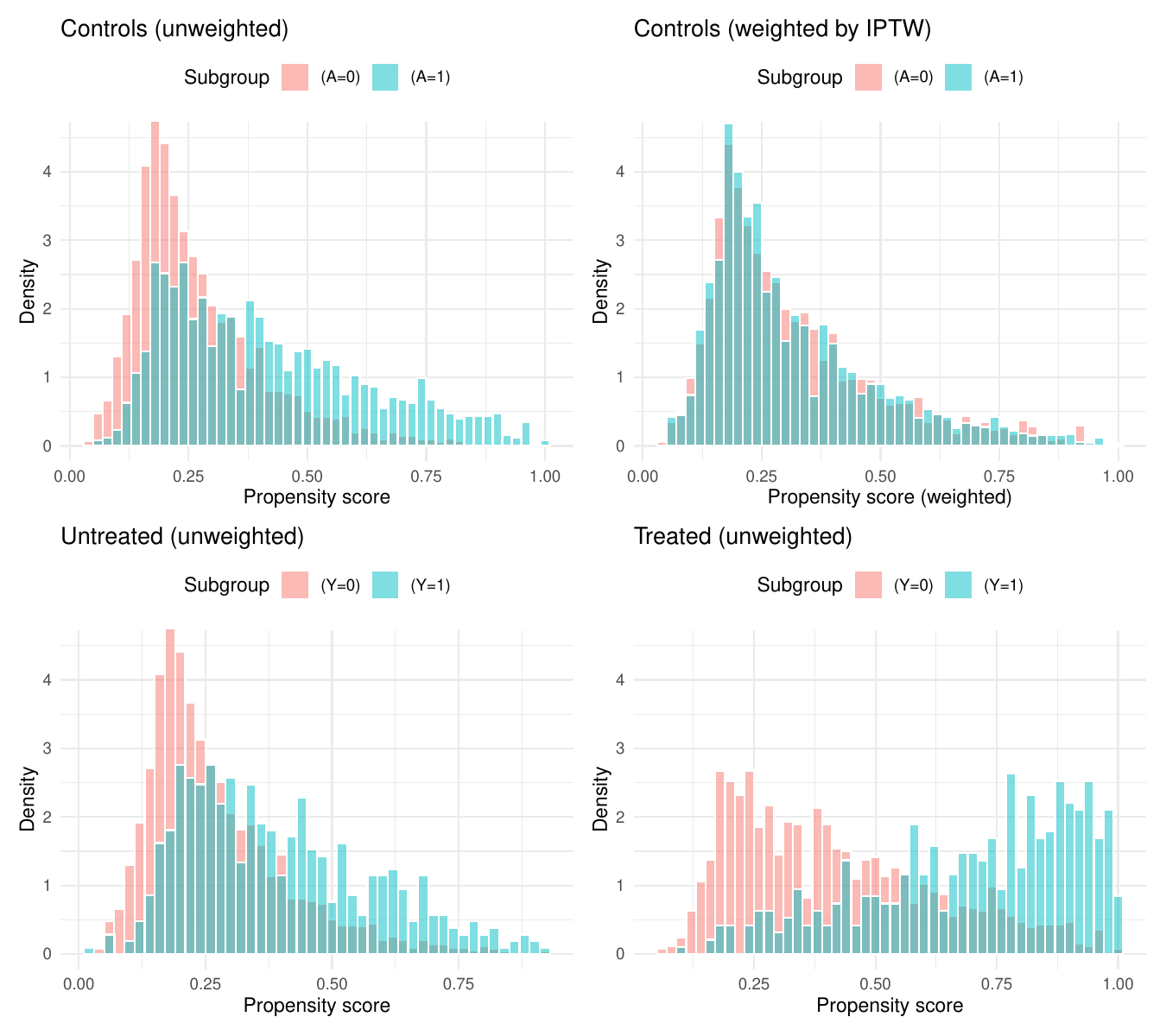}
 \caption{\small Diagnostic checks of the propensity score overlap for mRR weighting in a single CC dataset of size $N=5,000$ in simulation scenario (c): comparison between the treated controls and untreated controls before and after weighting for IPTW, between untreated controls and untreated cases before, and between treated cases and treated controls.}
 \label{fig: cc3_ps}
 \end{figure}

\clearpage

\subsubsection{Diagnostic checking in the TND simulation study}

Table \ref{tab:bc_tnd} summarizes the two-stage covariate balance checking  for IPTW and matching estimating the mRRT in a single TND dataset of $N=5,000$    using four different two-stage matching radii with the radius scaling parameters, $d_1=d_2\in \{0.05, 0.03, 0.02, 0.01$\}. Before matching,  both stages showed limited overlap and substantial imbalance in covariates, with large SMD particularly for $C_1$. Applying radius matching markedly improved alignment between the groups, both in the first stage, where matching was performed in the controls and in the second stage, where matching was performed between untreated cases and controls. 
Figure \ref{fig: tnd_ps} presents the propensity score distribution before and after IPTW and the two-stage matching weights. In the first stage (top row), after weighting by IPTW or matching, the two treatment control groups became more aligned in their distributions. In the second stage (bottom row), prior to matching, the two outcome untreated groups exhibited distinct distributions, with untreated controls concentrated above 0.2. Then the two groups showed substantially better overlap following the second stage matching. Moreover, the right tail of the untreated cases evident in the pre-matching panel (bottom left) is absent in the matched panel because these individuals had no eligible matches and therefore received zero weights.
Figure \ref{fig: ps_tnd_mRR_iptw} presents the propensity score distributions in controls before and after IPTW weighting, and the distributions in untreated and treated individuals for the mRR estimation. IPTW weighting resolved the small differences between treated and untreated in the control group (top row). The bottom row suggests that the propensity score models, fit with control data, are extrapolating to predict values for the cases (little overlap of the blue cases above 0.5 on the right-hand side of each of the bottom histograms).

\begin{table}[H]
\centering\small\captionsetup{width=.5\textwidth}
\caption{\footnotesize Diagnostic checks of the covariate balance checking for the mRRT weighting in a single simulated TND dataset: comparison among controls for IPTW and matching, and among untreated individuals for matching across matching radii. Variables are summarized as mean and standard deviation (mean (SD)); SMD: standardized mean difference. }\label{tab:bc_tnd}\vspace{2mm}
\resizebox{0.8\width}{0.28\textheight}{%
\begin{tabular}{lllccc}
  \toprule
 \multicolumn{6}{l}{ \emph{First stage: matching among the controls}} \vspace{2mm} \\ 
& $d_1$ & Variable & $(A=0, Y=0)$ & $(A=1, Y=0)$ & SMD \\ 
 \cmidrule(lr){2-6}
\multirow{3}{*}{Before weighting}& \multirow{3}{*}{-}& Size &  2427 &  589 &  \\ 
 & &  $C_1$  & -0.612 (0.855) & 0.021 (0.659) &  0.829 \\ 
&  &  $C_2$  & 0.174 (0.379) & 0.199 (0.399) &  0.064 \\  \cmidrule(lr){2-6}
\multirow{2}{*}{ IPTW} & \multirow{2}{*}{-} &  $C_1$  & 0.015 (0.630) & 0.021 (0.659)  &  0.008\\ 
&  &  $C_2$  & 0.198 (0.398)&  0.199 (0.399)&   0.002\\ \cmidrule(lr){2-6}
\multirow{12}{*}{Matching} & \multirow{3}{*}{ $0.05$} & Size & 1581 & 589 &  \\ 
 &  &  $C_1$  & -0.062 (0.57) & 0.021 (0.659) & 0.135 \\ 
&  &  $C_2$  & 0.177 (0.382) & 0.199 (0.399) & 0.055 \\  \cmidrule(lr){2-6}
& \multirow{3}{*}{ $0.03$} & Size & 1581 & 589 &  \\ 
 &  &  $C_1$  & -0.018 (0.605) & 0.021 (0.659) & 0.062 \\ 
 &  &  $C_2$  & 0.185 (0.389) & 0.199 (0.399) & 0.034 \\ \cmidrule(lr){2-6}
& \multirow{3}{*}{ $0.02$} & Size & 1581 & 589 &  \\ 
 &   &  $C_1$  & 0.003 (0.628) & 0.021 (0.659) & 0.028 \\ 
&  &  $C_2$  & 0.190 (0.393) & 0.199 (0.399) & 0.021 \\ \cmidrule(lr){2-6}
& \multirow{3}{*}{ $0.01$}& Size & 1581 & 589 &  \\ 
&   &  $C_1$  &  0.019 (0.65) & 0.021 (0.659) & 0.003 \\ 
 & &  $C_2$  & 0.193 (0.395) & 0.199 (0.399) & 0.014 \\ 
   \midrule
  \multicolumn{6}{l}{\emph{Second stage: donor matching (independent of $d_1$) between untreated cases and controls}}\vspace{2mm}\\ 
& $d_2$ & Variable &  $(A=0, Y=0)$ & $(A=0, Y=1)$ & SMD\\ 
 \cmidrule(lr){2-6}
\multirow{3}{*}{Before weighting} & \multirow{3}{*}{-}& Size &  2427 & 1466 &  \\ 
 &  &  $C_1$  & -0.612 (0.855) & 1.215 (0.760) &  2.254 \\ 
&  &  $C_2$   &  0.174 (0.380) & 0.601 (0.490) &  0.975 \\ \cmidrule(lr){2-6}
\multirow{12}{*}{Matching} 
&  \multirow{3}{*}{$0.05$} & Size & 2407 & 1457 &  \\ 
 &   &  $C_1$  & 0.8 (0.699) & 1.199 (0.747) & 0.553 \\ 
 & &  $C_2$  & 0.419 (0.493) & 0.598 (0.49) & 0.365 \\  \cmidrule(lr){2-6}
&  \multirow{3}{*}{$0.03$} & Size & 2377 & 1431 &  \\ 
 &  &  $C_1$  & 1.004 (0.69) & 1.169 (0.717) & 0.235 \\ 
 & &  $C_2$  & 0.493 (0.5) & 0.595 (0.491) & 0.208 \\  \cmidrule(lr){2-6}
& \multirow{3}{*}{$0.02$} & Size & 2345 & 1416 &  \\ 
 &  &  $C_1$  & 1.088 (0.682)&  1.154 (0.705) &  0.095 \\ 
 & &  $C_2$   & 0.524 (0.499) &  0.593 (0.491)  &  0.140   \\  \cmidrule(lr){2-6}
& \multirow{3}{*}{$0.01$} & Size & 2270 & 1387 &  \\ 
 &  &  $C_1$  & 1.132 (0.67) & 1.13 (0.691) & 0.004 \\ 
 &  &  $C_2$  &0.545 (0.498) & 0.587 (0.492) & 0.085 \\ 
   \bottomrule
\end{tabular}}
\end{table}

\begin{figure}[htbp]
 \centering
 \includegraphics[width=14cm,height= 10cm]{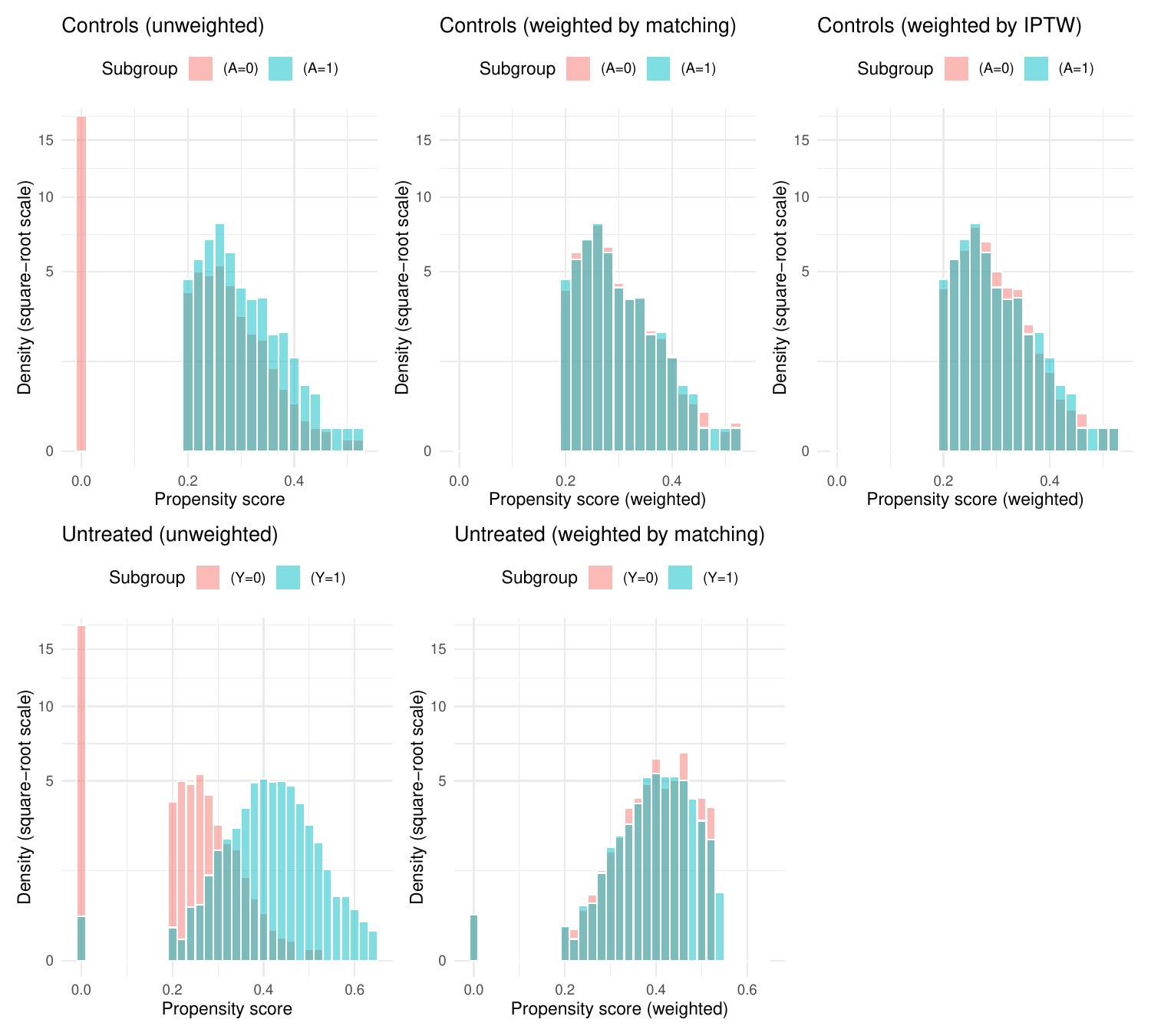}
 \caption{\small Diagnostic checks of the propensity score overlap for mRRT weighting in a single simulated TND dataset: comparison between the treated controls and untreated controls before and after weighting for IPTW and matching  ($d_1=0.01$ and $d_2=0.01$), and between untreated controls and untreated cases before and after matching. 
 }
 \label{fig: tnd_ps}
 \end{figure}

\begin{figure}[htbp]
 \centering
\includegraphics[width=12cm,height= 10cm]{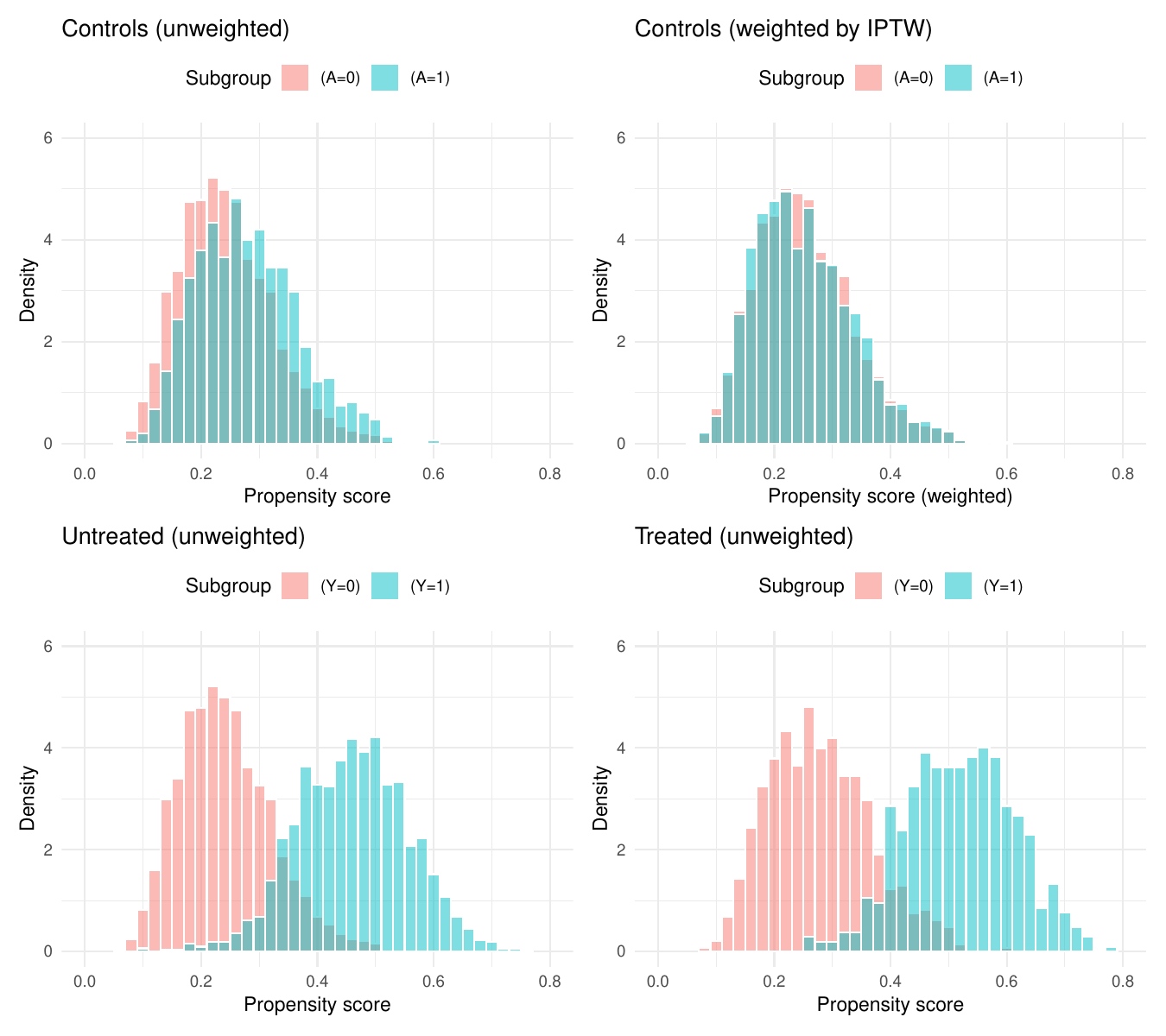}
 \caption{\small Diagnostic checks of the propensity score overlap for mRR weighting in a TND simulation study: comparison between the treated controls and untreated controls before and after IPTW weighting, between untreated cases and untreated controls, and between treated cases and treated controls. }
 \label{fig: ps_tnd_mRR_iptw}
 \end{figure}

\subsection{Results for simulations}

\subsubsection{Results for CC simulation scenario (c)}

Table \ref{tab:est_CC2} presents the estimates from CC simulation scenario (c), which generates CC data with a nonlinear treatment data-generating model. In implementing the IPTW and doubly robust one-step (OS) estimators, we specified propensity score models including only the main effects of $C_1$, $C_2$, and the outcome models including main terms of $C_1$, $C_2$ and $A$. IPTW\_SL and OS\_SL denote estimators in which nuisance functions were estimated using machine learning via Super Learner (SL.nnls, SL.glmnet, SL.hal9001). Standard error of IPTW\_SL was estimated using a double robust sandwich estimator, while standard error of OS\_SL was estimated based on the efficient influence function. We truncated the estimated probabilities $\hat{\mu_a}, \hat{\pi}_a$ for $a\in(1,0)$ to the interval [0.001, 0.999]. 

\begin{table}[ht]
\centering\small
\captionsetup{width=0.52\textwidth, font=footnotesize, skip=2pt}
\caption{Estimates in case--control simulation scenario (c)  (500 draws of $N=5,000$).  MC SE: Monte Carlo standard error; Coverage Rate: \% of confidence intervals that included the true value.}\label{tab:est_CC2}
\resizebox{\width}{!}{%
\begin{tabular}{lrrrrc}
  \toprule
Method & Estimate & Bias & MC SE &  SE & Coverage Rate (\%)\\  \midrule
  \multicolumn{6}{l}{\emph{Estimation of mRR (true value: 0.773)}}\\ 
IPTW  & 0.855  & 0.082 & 0.061 & 0.069 & 81.8 \\  
IPTW\_SL  & 0.820  & 0.047 & 0.072 &  0.075 & 89.6 \\ 
OS  &  0.735  & 0.038 & 0.069 & 0.063 &  85.8 \\
OS\_SL  &  0.764 & 0.009 & 0.113 & 0.105 & 91.8  \\
   \bottomrule
\end{tabular}}
\captionsetup{font=footnotesize}
\end{table}

\subsubsection{Results for the TND simulation study}
Table \ref{tab:est_TND} compares the performance of IPTW and the two-stage radius matching for estimation of the mRRT with different choices of the two-stage matching radius multipliers, as well as the performance of IPTW, OS and logistic regression estimators for estimation of the mRR based on 300 replications. For the OS estimator, we applied a correctly specified logistic regression for the outcome and propensity score models, and bounded between [0.01, 0.99] for $\mu_1$ and $\mu_0$.

\begin{table}[ht]
\centering\small\captionsetup{width=.60\textwidth}
\caption{\small Estimates in TND simulation study (300 draws of $N=5000$).  MC SE: Monte Carlo standard error; BS SE: Mean bootstrap-estimated standard error; Coverage Rate: \% of confidence intervals that included the true value}\label{tab:est_TND}\vspace{2mm}
 \resizebox{1\width}{0.21\textheight}{%
\begin{tabular}{lrrrrrrc}
  \toprule
Method & $d_1$ & $d_2$ & $\hat{\psi}$ & Bias & SE\_MC & SE\_bs & Coverage Rate (\%)\\  \midrule
  \multicolumn{8}{l}{\emph{For the mRRT estimation (true value: 0.392),}}\vspace{3mm}\\ 
IPTW  &- & - & 0.358 & 0.033 & 0.052 & 0.054 & 89.67 \\  \cmidrule(lr){2-8}
\multirow{16}{*}{Matching} & 
\multirow{4}{*}{0.05} &  0.05 & 0.632 & 0.240 & 0.116 & 0.097 & 23.67 \\ 
 &    & 0.03 & 0.636 & 0.244 & 0.132 & 0.111 & 29.67 \\ 
 &  & 0.02 & 0.647 & 0.256 & 0.139 & 0.118 & 29.33 \\ 
 &  & 0.01 & 0.677 & 0.285 & 0.141 & 0.123 & 19.33 \\  \cmidrule(lr){2-8}
 & \multirow{4}{*}{0.03} &  0.05 & 0.503 & 0.111 & 0.099 & 0.088 & 67.33 \\ 
 & & 0.03 & 0.485 & 0.093 & 0.106 & 0.099 & 80.33 \\ 
 &  & 0.02 & 0.487 & 0.095 & 0.109 & 0.105 & 82.00 \\ 
 &  & 0.01 & 0.450 & 0.059 & 0.098 & 0.103 & 91.67 \\   \cmidrule(lr){2-8}
 & \multirow{4}{*}{0.02}  & 0.05 & 0.458 & 0.066 & 0.091 & 0.085 & 85.33 \\ 
 & &0.03 & 0.433 & 0.042 & 0.096 & 0.095 & 91.33 \\ 
  & &0.02 & 0.432 & 0.040 & 0.098 & 0.100 & 92.00 \\ 
 & & 0.01 & 0.450 & 0.059 & 0.098 & 0.103 & 91.67 \\ \cmidrule(lr){2-8}
 & \multirow{4}{*}{0.01} & 0.05 & 0.431 & 0.039 & 0.086 & 0.083 & 90.67 \\ 
 & & 0.03 & 0.403 & 0.011 & 0.089 & 0.092 & 94.33 \\ 
  &  & 0.02 & 0.398 & 0.006 & 0.091 & 0.097 & 95.33 \\ 
 &  &0.01 & 0.409 & 0.017 & 0.092 & 0.100 & 96.00 \\ \midrule
 \multicolumn{8}{l}{\emph{For the mRR estimation (true value: 0.371),}}\vspace{3mm}\\ 
IPTW  & -& - & 0.340  & 0.032 & 0.035 & 0.042&  92.67 \\  
OS   & -& - & 0.323  & 0.047 & 0.081 & 0.082&  86.67 \\
Logistic & -& - & 0.267  & 0.104 & 0.031 & 0.034&  22.67 \\
   \bottomrule
\end{tabular}}
\end{table}

For the mRRT estimation, the IPTW exhibited moderate bias (0.033) and coverage (89.67\%) below the nominal level. In contrast, the performance of the matching estimators was highly sensitive to the choices of $d_1$. When large radii were used in the first stage ($d_1 =0.05$), the matching estimator was substantially biased (exceeding 0.24) and had severe undercoverage (below 30\%), indicating inadequate covariate balance due to overly permissive matching for the propensity scores. As the matching radii were tightened, bias generally decreased  and coverage improved markedly. In particular, combinations with small radii in both stages (e.g,. $d_1\in \{0.02, 0.01\}$ and $d_1 \in \{0.03, 0.02, 0.01\}$) achieved near-zero bias, as low as 0.006. The Monte Carlo and bootstrap standard errors were well-aligned, and coverage rates were close to or exceeding 95\%. These results highlight the importance of carefully tuning both stages of the matching procedures and show that, with appropriately chosen radii, two-stage matching can outperform IPTW in terms of bias and coverage. For mRR estimation, IPTW exhibited lower bias and higher coverage rates than the OS estimator, which showed larger bias and undercoverage relative to the true mRR value. This discrepancy is likely attributable to the use of upper bounds (0.98) on the estimated conditional outcome probabilities, as some values of $\hat{\mu}_a(\boldsymbol{c}_i)$ approached one, which can lead to instability, as discussed in Section 2.4.2 of the manuscript. Logistic regression performed poorly despite the absence of effect modification.

\subsubsection{Demonstration of double robustness of the OS estimator}
Table \ref{tab:dr_CC} presents the doubly robustness property of the OS estimator in CC simulation scenario (b). We can see that when either the propensity score or outcome models were correctly specified, the OS estimator was unbiased. In contrast, when the propensity score model was null (i.e. only included an intercept term in the logistic regression), IPTW was highly biased. 

\begin{table}[ht!]
\centering\small
\captionsetup{width=.70\textwidth}
\caption{\small Demonstration of double robustness of the OS estimator in case--control simulation scenario (b). Bias, Monte Carlo standard errors (MC SE), mean bootstrap-estimated standard errors (BS SE), and coverage rates.}\label{tab:dr_CC}\vspace{1mm}
\resizebox{0.7\textwidth}{!}{%
\begin{tabular}{lrrrrrc}
  \toprule
Scenario & Method & $\hat{\psi}_{mRR}$ & Bias & MC SE & BS SE & Coverage Rate (\%)\\\midrule
  \multicolumn{7}{l}{\emph{True value: ${\psi}_{mRR}=0.770$}}\vspace{1mm}\\ 
\multirow{2}{*}{Both true}  & IPTW & 0.777  & 0.006 & 0.066 & 0.066 & 93.60 \\  
 & OS &  0.786 & 0.015 & 0.073 & 0.078 & 94.60  \\\cmidrule(lr){2-7}
 \multirow{2}{*}{Only PS true}  & IPTW & 0.777  & 0.006 & 0.066 & 0.066 & 93.60 \\  
 & OS &  0.779 & 0.008 & 0.066 & 0.067 & 94.00  \\\cmidrule(lr){2-7}
 \multirow{2}{*}{Only outcome true}  & IPTW & 1.827  & 1.056 & 0.126 & 0.133 & 0 \\  
 & OS &  0.783 & 0.012 & 0.081 & 0.081 & 94.80  \\\cmidrule(lr){2-7}
 \multirow{2}{*}{Both null}  & IPTW & 1.827  & 1.056 & 0.126 & 0.133 & 0 \\  
 & OS &  1.827  & 1.056 & 0.126 & 0.133 & 0   \\
   \bottomrule
   \noalign{\vskip 1mm}
   \multicolumn{7}{l}{\footnotesize\makecell[l]{Both true: both propensity score model and outcome model were correctly specified; Only PS true: the \\propensity score model was correct but the outcome model was null; Only outcome true: the outcome\\ model was correct but the propensity score model was null; Both null: both models were null.}}
\end{tabular}}
\end{table}

\section{Application}

\subsection{PROVAQ data}
Table \ref{tab:descriptive_app} presents the characteristics of the study population by case status in the PROVAQ study. Continuous variables are presented as mean (standard deviation); Categorical variables are presented as size (percentage), where percentages are calculated within each column.

\begin{table}[ht]
\centering\small\captionsetup{width=.56\textwidth}
 \caption{\small Descriptive characteristics of the study population by case status in the PROVAQ study. }
 \label{tab:descriptive_app}\vspace{1mm}
 \resizebox{1\width}{0.22\textheight}{%
\begin{tabular}{lrrr}
  \toprule
 Variable & Overall & Controls & Cases \\ 
  \midrule
Size  &  1,406  &   908  &   498  \\ 
  Age (mean (SD))  & 57.87 (12.64)  & 58.19 (12.62)  & 57.28 (12.66)  \\ 
  Education (\%):  &    &     &     \\ 
  \hspace{3mm}   Less HS  &   136 (9.7)   &    83 (9.1)   &    53 (10.6)   \\ 
  \hspace{3mm}   HS  &   339 (24.1)   &   199 (21.9)   &   140 (28.1)   \\ 
   \hspace{3mm}  College  &   422 (30.0)   &   278 (30.6)   &   144 (28.9)   \\ 
  \hspace{3mm}   Undergraduate  &   356 (25.3)   &   244 (26.9)   &   112 (22.5)   \\ 
  \hspace{3mm}   Graduate  &   153 (10.9)   &   104 (11.5)   &    49 ( 9.8)   \\ 
  BMI (mean (SD))  & 26.21 (5.85)  & 25.92 (5.54)  & 26.73 (6.36)  \\ 
  Endometriosis: ever (\%)  &    95 ( 6.8)   &    50 ( 5.5)   &    45 ( 9.0)   \\ 
  HRT: ever (\%)  &   450 (32.0)   &   282 (31.1)   &   168 (33.7)   \\ 
  Duration of OC use: (\%):  &    &     &     \\ 
  \hspace{3mm}   0  &   280 (19.9)   &   172 (18.9)   &   108 (21.7)   \\ 
  \hspace{3mm}   $>$0-2 years  &   253 (18.0)   &   158 (17.4)   &    95 (19.1)   \\ 
  \hspace{3mm}   2-10 years  &   533 (37.9)   &   336 (37.0)   &   197 (39.6)   \\ 
  \hspace{3mm}   10+ years  &   340 (24.2)   &   242 (26.7)   &    98 (19.7)   \\ 
  Ancestry (\%):  &    &     &     \\ 
  \hspace{3mm}   French Canadian  &   948 (67.4)   &   609 (67.1)   &   339 (68.1)   \\ 
   \hspace{3mm}  Other European  &   333 (23.7)   &   217 (23.9)   &   116 (23.3)   \\ 
   \hspace{3mm}  Other/mixed  &   125 ( 8.9)   &    82 ( 9.0)   &    43 ( 8.6)   \\ 
  Smoking (\%):  &    &     &     \\ 
   \hspace{3mm}  Never  &   628 (44.7)   &   426 (46.9)   &   202 (40.6)   \\ 
   \hspace{3mm}  $>$0 to 15 packyears  &   379 (27.0)   &   237 (26.1)   &   142 (28.5)   \\ 
   \hspace{3mm}  $>$15 packyears  &   399 (28.4)   &   245 (27.0)   &   154 (30.9)   \\ 
  Alcohol: ever (\%)  &  1,022 (72.7)   &   672 (74.0)   &   350 (70.3)   \\ 
  Aspirin: regular use (\%)  &   215 (15.3)   &   147 (16.2)   &    68 (13.7)   \\ 
  Otherthan aspirin*: yes (\%)  &   381 (27.1)   &   250 (27.5)   &   131 (26.3)   \\
   \bottomrule
\end{tabular}}
\captionsetup{font=footnotesize, justification=raggedright,
  singlelinecheck=false}
\caption*{\textit{Notes:} Otherthan aspirin means regular use of na-nsaids or acetaminophen. }
\end{table}

\subsection{Diagnostic checking}
Table \ref{tab:cov_bal_application} presents the covariate balance: SMD evaluation in controls and in untreated individuals before and after IPTW and matching weighting ($d_1=d_2=0.2$).
Figure \ref{fig: app_ps_mrrt} shows the propensity score overlap for mRRT weighting: comparison between the treated controls and untreated controls before and after weighting for IPTW and matching, and between untreated controls and untreated cases before and after matching. 
Figure \ref{fig: app_ps_mrr} presents the propensity score overlap for mRR weighting: comparison between the treated controls and untreated controls before and after IPTW weighting, between untreated cases and untreated controls, and between treated cases and treated controls. 
Figure \ref{fig: app_ps_ml} presents the propensity score overlap using IPTW with machine learning: comparison between the treated controls and untreated controls before and after IPTW weighting for mRRT and for mRR, between untreated cases and untreated controls, and between treated cases and treated controls. 
Table \ref{tab:cov_balds_application} presents the diagnostic checks of the covariate balance for matching under different values of $d_1$ and $d_2$ in the PROVAQ study. We selected ($d_1=d_2=0.2$) for the final analysis because this choice successfully matched all cases who did not regularly use aspirin during the second stage of matching and yielded the best overall covariate balance among the specifications examined.

\begin{table}[ht]
\centering\small\captionsetup{width=.86\textwidth}
 \caption{Diagnostic checks of the covariate balance: SMD evaluation in controls and in untreated individuals before and after IPTW and matching weighting ($d_1=d_2=0.2$) in the PROVAQ study. }
 \label{tab:cov_bal_application}\vspace{2mm}
  \resizebox{1\width}{0.23\textheight}{%
\begin{tabular}{lrrrrrrrr}
  \toprule
  \multirow{4}{*}{{Variable}} & \multicolumn{6}{c}{In controls} & \multicolumn{2}{c}{In untreated individuals} \\\cmidrule(lr){2-7}\cmidrule(lr){8-9}
   &  \multirow{2}{*}{{Unweighted}}  & \multicolumn{3}{c}{For mRRT} &  \multicolumn{2}{c}{For mRR} &  \multirow{2}{*}{{Unweighted}} & For mRRT\\ \cmidrule(lr){3-5}\cmidrule(lr){6-7}\cmidrule(lr){9-9}
    & & IPTW  & IPTW\_SL& Matching &  IPTW & IPTW\_SL &   & Matching  \\ 
  \midrule
Age & 0.934 & 0.019 & 0.078 & 0.008 & 0.040 & 0.191 & 0.049 & 0.103 \\  
Education:  &   &   &  &   &  & &   &  \\  
\hspace{3mm}Less HS & 0.355 & 0.015 & 0.094 & 0.040 & 0.008 & 0.080 & 0.059 & 0.030 \\ 
\hspace{3mm}HS & 0.073 & 0.005 & 0.137 & 0.019 & 0.026 & 0.022 & 0.146 & 0.135 \\ 
\hspace{3mm}College & 0.036 & 0.009 & 0.090 & 0.036 & 0.041 & 0.007 & 0.019 & 0.006 \\ 
\hspace{3mm}Undergraduate & 0.326 & 0.019 & 0.079 & 0.006 & 0.046 & 0.096 & 0.143 & 0.129 \\ 
\hspace{3mm}Graduate & 0.004 & 0.016 & 0.051 & 0.018 & 0.034 & 0.092 & 0.023 & 0.023 \\ 
BMI & 0.387 & 0.041 & 0.118 & 0.021 & 0.032 & 0.160 & 0.184 & 0.161 \\ 
Endometriosis: ever & 0.187 & 0.063 & 0.037 & 0.002 & 0.033 & 0.029 & 0.172 & 0.170 \\ 
HRT: ever & 0.331 & 0.035 & 0.004 & 0.002 & 0.043 & 0.051 & 0.058 & 0.032 \\ 
Duration of OC use:  & &  &  &  &  &  &   &  \\ 
\hspace{3mm}0 & 0.201 & 0.006 & 0.055 & 0.011 & 0.125 & 0.098 & 0.055 & 0.035 \\ 
\hspace{3mm}$>$0-2 years & 0.030 & 0.005 & 0.023 & 0.018 & 0.074 & 0.060 & 0.036 & 0.032 \\ 
\hspace{3mm}2-10 years & 0.060 & 0.004 & 0.127 & 0.009 & 0.005 & 0.030 & 0.067 & 0.067 \\ 
\hspace{3mm}10+ years & 0.298 & 0.007 & 0.117 & 0.017 & 0.049 & 0.185 & 0.157 & 0.136 \\ 
Ancestry: & &  &  &  &  &  &  &  \\ 
\hspace{3mm}French Canadian & 0.148 & 0.006 & 0.096 & 0.014 & 0.018 & 0.004 & 0.041 & 0.033 \\ 
\hspace{3mm}Other European & 0.100 & 0.009 & 0.123 & 0.002 & 0.048 & 0.005 & 0.031 & 0.028 \\ 
\hspace{3mm}Other/mixed & 0.097 & 0.027 & 0.034 & 0.022 & 0.111 & 0.014 & 0.022 & 0.013 \\ 
Smoking: &  &   &   &  &  &  &  &   \\ 
\hspace{3mm}Never & 0.032 & 0.018 & 0.003 & 0.003 & 0.000 & 0.025 & 0.159 & 0.144 \\ 
\hspace{3mm}$>$0 to 15 packyears & 0.140 & 0.013 & 0.055 & 0.018 & 0.092 & 0.016 & 0.070 & 0.068 \\ 
\hspace{3mm}$>$15 packyears & 0.167 & 0.031 & 0.046 & 0.013 & 0.097 & 0.044 & 0.105 & 0.090 \\ 
Alcohol: ever & 0.033 & 0.035 & 0.063 & 0.019 & 0.168 & 0.084 & 0.038 & 0.042 \\ 
Otherthan aspirin:yes & 0.394 & 0.043 & 0.086 & 0.018 & 0.016 & 0.106 & 0.007 & 0.015 \\ 
   \bottomrule
\end{tabular}}
\end{table}

\begin{figure}[htbp]
 \centering
 \includegraphics[width=14cm,height=10cm]{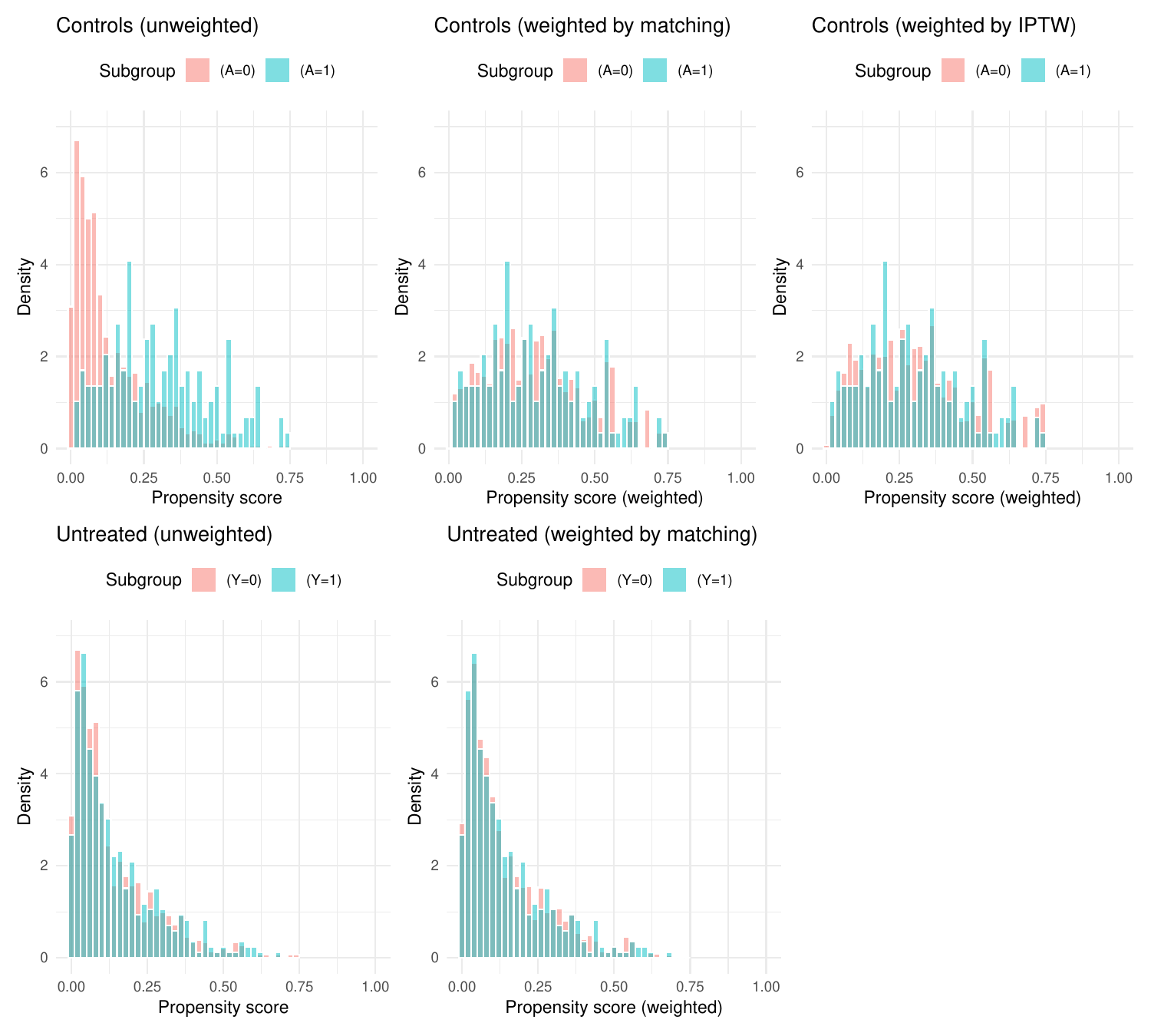}
 \caption{\small Diagnostic checks of the propensity score overlap for mRRT  in the PROVAQ study: comparison between the treated controls and untreated controls before and after weighting for IPTW and matching ($d_1=d_2=0.2$), and between untreated controls and untreated cases before and after matching. }
 \label{fig: app_ps_mrrt}
 \end{figure}

\begin{figure}[htbp]
 \centering
 \includegraphics[width=12cm,height=10cm]{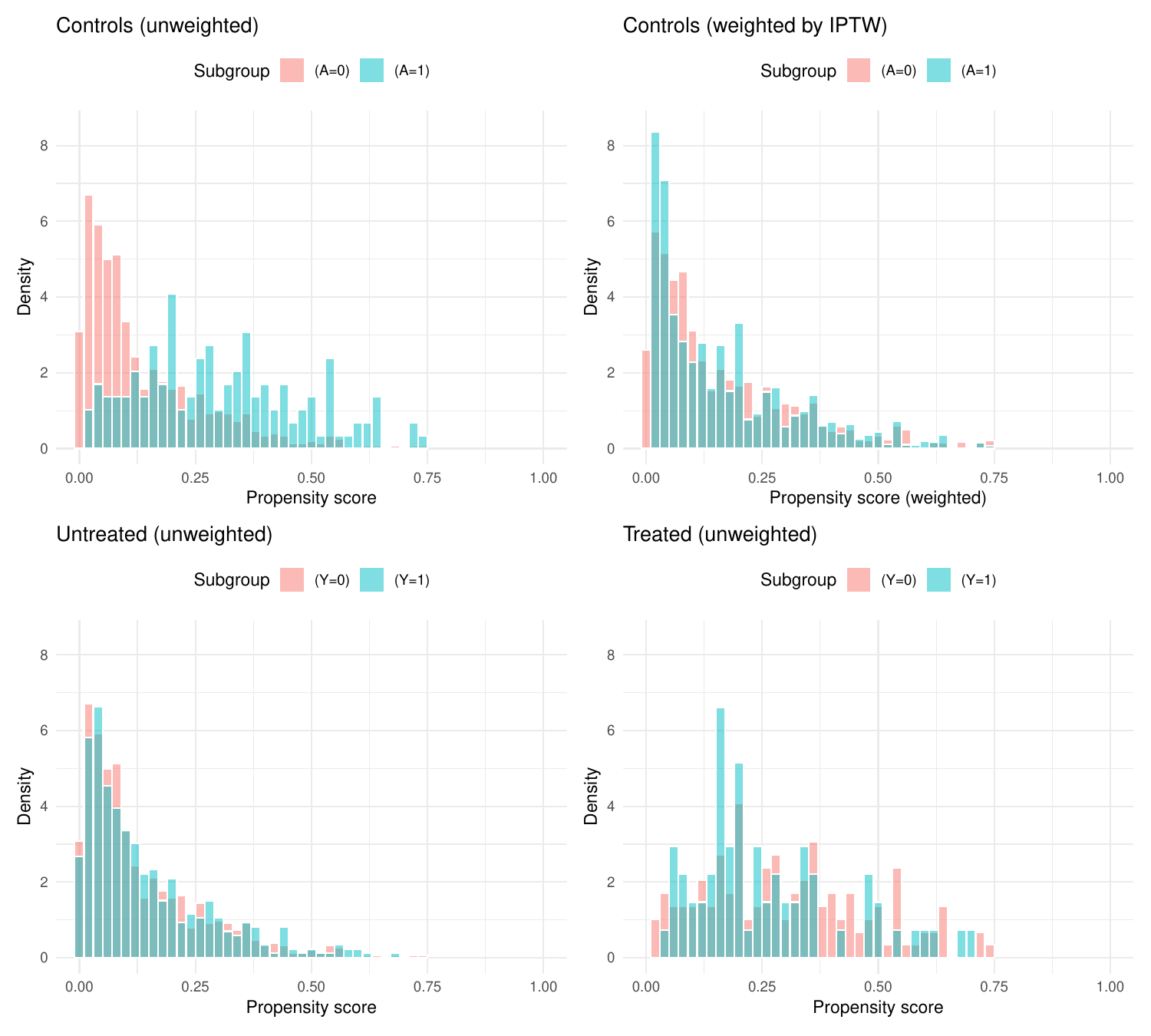}
 \caption{\small Diagnostic checks of the propensity score overlap for mRR in the PROVAQ study: comparison between the treated controls and untreated controls before and after IPTW weighting, between untreated cases and untreated controls, and between treated cases and treated controls. }
 \label{fig: app_ps_mrr}
 \end{figure}

\begin{figure}[htbp]
 \centering
 \includegraphics[width=14cm,height=10cm]{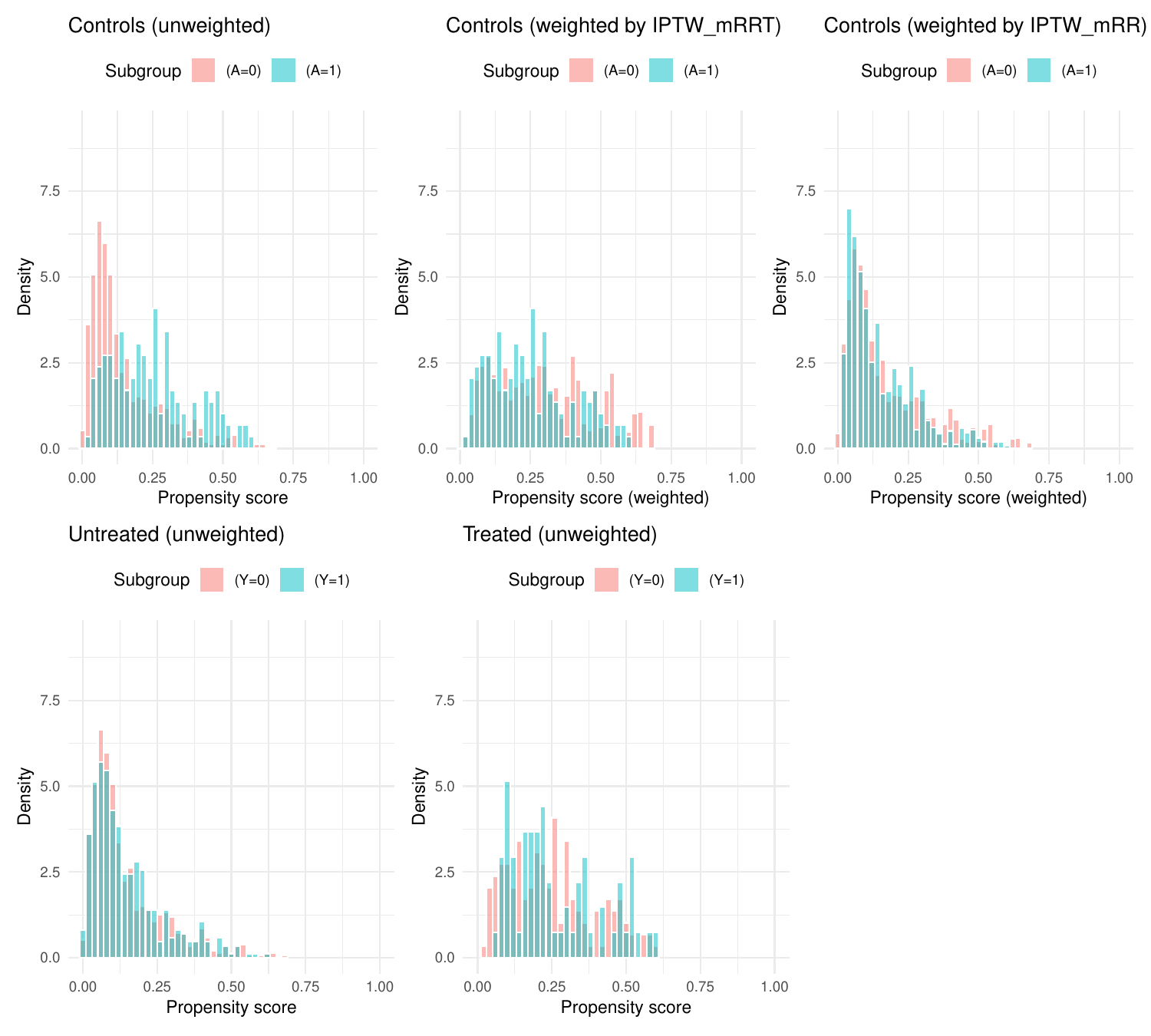}
 \caption{\small Diagnostic checks of the propensity score overlap with machine learning in the PROVAQ study: comparison between the treated controls and untreated controls before and after IPTW weighting for mRRT and for mRR, between untreated cases and untreated controls, and between treated cases and treated controls. }
 \label{fig: app_ps_ml}
 \end{figure}

\begin{table}[ht]
\centering\small\captionsetup{width=.99\textwidth}
 \caption{Diagnostic checks of the covariate balance for matching under different values of $d_1$ and $d_2$ in the PROVAQ study: SMD evaluation in controls and in untreated individuals before and after matching weighting. } \label{tab:cov_balds_application}\vspace{2mm}
  \resizebox{1\width}{0.23\textheight}{%
\begin{tabular}{lrrrrrrrrrrrr}
  \toprule
 \multirow{3}{*}{{Variable}}  &   \multicolumn{6}{c}{In controls}   &   \multicolumn{6}{c}{In untreated individuals }  \\ \cmidrule(lr){2-7}\cmidrule(lr){8-13}
   & \multirow{2}{*}{{Unweighted}}&   \multicolumn{5}{c}{$d_1$}& \multirow{2}{*}{{Unweighted}} &  \multicolumn{5}{c}{ $d_2$}\\ \cmidrule(lr){3-7}\cmidrule(lr){9-13}
   &   &  0.2 &  0.16 &  0.12 &  0.08 &  0.04 &   &  0.2 &  0.16 &  0.12 &  0.08 &  0.04 \\ 
  \midrule
Age & 0.934 & 0.008 & 0.005 & 0.007 & 0.026 & 0.000 & 0.049 & 0.103 & 0.105 & 0.111 & 0.112 & 0.117 \\ 
 Education: &   &   &  &   &  &  &  &  &  &  &  &  \\ 
  \hspace{3mm} Less HS & 0.355 & 0.040 & 0.051 & 0.064 & 0.066 & 0.080 & 0.059 & 0.030 & 0.026 & 0.020 & 0.027 & 0.024 \\ 
  \hspace{3mm} HS & 0.073 & 0.019 & 0.019 & 0.019 & 0.021 & 0.028 & 0.146 & 0.135 & 0.135 & 0.145 & 0.138 & 0.145 \\ 
  \hspace{3mm} College & 0.035 & 0.036 & 0.035 & 0.039 & 0.038 & 0.034 & 0.019 & 0.006 & 0.004 & 0.003 & 0.003 & 0.007 \\ 
  \hspace{3mm} Undergraduate & 0.326 & 0.006 & 0.015 & 0.022 & 0.027 & 0.032 & 0.143 & 0.129 & 0.130 & 0.138 & 0.142 & 0.124 \\ 
  \hspace{3mm} Graduate & 0.004 & 0.018 & 0.023 & 0.027 & 0.027 & 0.058 & 0.023 & 0.023 & 0.021 & 0.020 & 0.020 & 0.037 \\ 
  BMI & 0.387 & 0.021 & 0.018 & 0.034 & 0.046 & 0.037 & 0.184 & 0.161 & 0.162 & 0.165 & 0.164 & 0.162 \\ 
 Endometriosis: ever & 0.187 & 0.002 & 0.001 & 0.001 & 0.002 & 0.024 & 0.172 & 0.170 & 0.176 & 0.173 & 0.178 & 0.194 \\ 
  Everhrt: ever & 0.331 & 0.002 & 0.008 & 0.021 & 0.035 & 0.070 & 0.058 & 0.032 & 0.031 & 0.036 & 0.035 & 0.066 \\ 
  Duration of OC use:  &  &  &  &  &  &  &  &  &  &  &  &  \\ 
   \hspace{3mm} 0 & 0.201 & 0.011 & 0.007 & 0.019 & 0.018 & 0.005 & 0.055 & 0.035 & 0.034 & 0.036 & 0.028 & 0.042 \\ 
   \hspace{3mm} $>$0-2 years & 0.030 & 0.018 & 0.018 & 0.052 & 0.039 & 0.069 & 0.036 & 0.032 & 0.032 & 0.029 & 0.012 & 0.004 \\ 
  \hspace{3mm} 2-10 years & 0.060 & 0.008 & 0.006 & 0.009 & 0.009 & 0.010 & 0.067 & 0.067 & 0.068 & 0.069 & 0.069 & 0.075 \\ 
  \hspace{3mm} 10+ years & 0.298 & 0.017 & 0.018 & 0.019 & 0.029 & 0.049 & 0.157 & 0.136 & 0.135 & 0.136 & 0.115 & 0.119 \\ 
  Ancestry: &  &  &  &  &  &  &  &  &  &  &  &   \\ 
  \hspace{3mm} French Canadian & 0.148 & 0.014 & 0.021 & 0.026 & 0.029 & 0.050 & 0.041 & 0.033 & 0.039 & 0.050 & 0.051 & 0.061 \\ 
  \hspace{3mm} Other European & 0.100 & 0.002 & 0.009 & 0.014 & 0.014 & 0.038 & 0.031 & 0.028 & 0.031 & 0.041 & 0.040 & 0.040 \\ 
  \hspace{3mm} Other/mixed & 0.097 & 0.022 & 0.023 & 0.023 & 0.029 & 0.026 & 0.022 & 0.013 & 0.018 & 0.022 & 0.024 & 0.039 \\ 
  Smoking: &  &  &   &  & &   &  &   &   &   & &   \\ 
  \hspace{3mm} Never & 0.032 & 0.003 & 0.003 & 0.010 & 0.015 & 0.006 & 0.159 & 0.144 & 0.148 & 0.147 & 0.140 & 0.167 \\ 
  \hspace{3mm} $>$0 to 15 packyears & 0.140 & 0.018 & 0.015 & 0.013 & 0.001 & 0.031 & 0.070 & 0.068 & 0.068 & 0.068 & 0.061 & 0.076 \\ 
  \hspace{3mm} $>$15 packyears & 0.167 & 0.013 & 0.016 & 0.022 & 0.017 & 0.021 & 0.105 & 0.090 & 0.094 & 0.092 & 0.091 & 0.107 \\ 
  Alcohol: ever & 0.033 & 0.019 & 0.014 & 0.027 & 0.045 & 0.060 & 0.038 & 0.042 & 0.040 & 0.032 & 0.022 & 0.018 \\ 
  Otherthan aspirin: yes & 0.394 & 0.018 & 0.030 & 0.035 & 0.024 & 0.050 & 0.007 & 0.015 & 0.012 & 0.007 & 0.002 & 0.001 \\ \midrule
  Unmatched rate (\%) & -  &  0& 0 &  0&  0&  0 & - & 0 & 0 &  0.23 &  0.47 & 2.09  \\ 
   \bottomrule
\end{tabular}}
\end{table}

\subsection{Results for mRR estimation under three truncation intervals}
Table \ref{tab:est_app_trun} presents the estimated mRR using IPTW and one-step doubly robust estimator under three truncation intervals in the PROVAQ study. IPTW\_SL and OS\_SL denote estimators in which nuisance functions were estimated using machine learning via Super Learner (SL.glm, SL.gam, SL.glmnet, SL.randomForest), where the subscripts ${0.001}$, ${0.01}$, and  ${0.05}$ indicate the truncation intervals [0.001, 0.999], [0.01, 0.99], and [0.05, 0.95], respectively.

\begin{table}[H]
   \centering\small
\captionsetup{width=0.37\textwidth, font=footnotesize, skip=2pt}
\caption{Estimated mRR using IPTW and one-step (OS) doubly robust estimator under three truncation intervals in the PROVAQ study.  BS SE: Mean bootstrap-estimated standard error.}\label{tab:est_app_trun}
 \resizebox{0.95\width}{0.085\textheight}{%
\begin{tabular}{lrrc}
  \toprule
Method & Estimate & BS SE & 95\% CI\\  \midrule
IPTW  & 0.771 & 0.289 & [0.550, 1.627] \\
IPTW\_SL$_{0.001}$ & 0.716 & 0.125 & [0.467, 0.942] \\
IPTW\_SL$_{0.01}$ & 0.716 & 0.125 & [0.475, 0.953] \\
IPTW\_SL$_{0.05}$ & 0.714 & 0.116 & [0.460, 0.897] \\
OS  &  0.781 & 0.315 & [0.506, 1.634]  \\
OS\_SL$_{0.001}$ & 0.674 & 0.617 & [0.082, 2.782] \\
OS\_SL$_{0.01}$ & 0.674 & 0.347 & [0.238, 1.641] \\
OS\_SL$_{0.05}$ & 0.718 & 0.202 & [0.597, 1.404] \\
   \bottomrule
\end{tabular}}
\captionsetup{font=footnotesize}
\end{table}

\end{document}